\newtheorem{invariant}{Invariant}
\newcommand{\mca}{\mathcal{A}}
\newcommand{\mcc}{\mathcal{C}}
\newcommand{\mcf}{\mathcal{F}}
\newcommand{\mci}{\mathcal{I}}
\newcommand{\mck}{\mathcal{K}}
\newcommand{\mcn}{\mathcal{N}}
\newcommand{\mcp}{\mathcal{P}}
\newcommand{\mcr}{\mathcal{R}}
\newcommand{\mcs}{\mathcal{S}}
\newcommand{\mcv}{\mathcal{V}}
\newcommand{\mcw}{\mathcal{W}}
\title{Moving Participants Turtle Consensus\footnote{This work was partially supported by
AFOSR DURIP grant FA2386-12-1-3008,
% AFOSR grants FA2386-12-1-3008, F9550-06-0019,
% by the AFOSR MURI ``Science of Cyber Security: Modeling, Composition, and Measurements'' as AFOSR grant FA9550-11-1-0137,
% by NSF grants CNS-1601879, 0430161, 0964409, 1040689, 1047540, 1518779, 1561209, and CCF-0424422 (TRUST),
by NSF grants CCF-1047540, CNS-1040689, CNS-1422544, CNS-1561209, CNS-1601879,
by a Google Faculty Research Award,
% by ONR grants N00014-01-1-0968 and N00014-09-1-0652,
% by DARPA grants FA8750-10-2-0238 and FA8750-11-2-0256,
by MDCN/iAd grant 54083,
and by gifts from Infosys, Facebook, and Amazon.com.  The authors would
also like to thank the anonymous reviewers of OPODIS.}}
\titlerunning{Moving Participants Turtle Consensus} %optional, in case that the title is too long; the running title should fit into the top page column
\author{Stavros Nikolaou}
\author{Robbert van Renesse}
\affil{Department of Computer Science, Cornell University, Ithaca, US\\
  \texttt{\{snikolaou,rvr\}@cs.cornell.edu}}
\authorrunning{S. Nikolaou and R. Van Renesse} %mandatory. First: Use abbreviated first/middle names. Second (only in severe cases): Use first author plus 'et. al.'
\subjclass{D.4.5 Fault-tolerance}% mandatory: Please choose ACM 1998 classifications from http://www.acm.org/about/class/ccs98-html . E.g., cite as "F.1.1 Models of Computation". 
\keywords{Consensus, adaptation, moving target defense}% mandatory: Please provide 1-5 keywords
\begin{document}

% Hyphenation

\maketitle

% Abstract
\begin{abstract}

We present Moving Participants Turtle Consensus (MPTC), an asynchronous
consensus protocol for crash and Byzantine-tolerant distributed
systems.  MPTC uses various \emph{moving target defense} strategies
to tolerate certain Denial-of-Service (DoS) attacks issued by an
adversary capable of compromising a bounded portion of the system.
MPTC supports on the fly reconfiguration of the consensus strategy
as well as of the processes executing this strategy when solving
the problem of agreement.  It uses existing cryptographic techniques
to ensure that reconfiguration takes place in an unpredictable
fashion thus eliminating the adversary's advantage on predicting
protocol and execution-specific information that can be used against
the protocol.

We implement MPTC as well as a State Machine Replication protocol
% based on this MPTC implementation
and evaluate our design
under different attack scenarios. Our evaluation shows that
MPTC approximates best case scenario performance
even under a well-coordinated DoS attack.

%In this chapter we describe an extension to the previously described Turtle Consensus protocol. This extension, that we call Moving Participants Turtle Consensus, enables the protocol to not only change the consensus strategy on the fly but also the set of processes that execute that strategy and it does so in an unpredictable fashion using cryptographic techniques.

\end{abstract}

% Introduction
\section{Introduction}\label{sec:mptc_intro}

Most distributed systems today are designed to tolerate failures. Existing 
fault-tolerance methods typically assume that failures are rare.
They are tailored to provide good performance when no failures occur but might perform poorly 
under failure scenarios. However, as shown in works like~\cite{CWAD09}, such 
designs allow malicious adversaries to craft workloads and Denial-of-Service (DoS) attacks that can substantially degrade the performance of certain 
state-of-the-art fault-tolerance protocols. As such DoS attacks become more 
common, it is becoming increasingly important to design fault-tolerance 
mechanisms that perform well in good scenarios while also 
gracefully handle adversarial ones. A core building block of many of these 
mechanisms are consensus protocols used by a set of replicas 
to agree on some state. One way to improve existing fault tolerance solutions 
is by enhancing the underlying consensus protocols with reconfiguration 
capabilities that allow them to change their execution parameters on the fly in 
order to better deal with adversarial workloads.

Our prior work on \emph{Turtle Consensus}~\cite{NvR15}
also aims at attack-tolerant consensus.
Turtle Consensus is a round-based consensus protocol that operates by using 
different consensus strategies across different rounds.
The system's processes try to reach agreement running a round of some consensus 
protocol in the literature; if they fail to do so they move onto the next 
round using a different protocol.
The selection of each round's strategy is predetermined 
and known to all processes running the protocol.
%Turtle Consensus employs fine-grained 
%reconfiguration that significantly changes the attack surface of the protocol, 
%thus making it more robust against protocol-specific attacks. 
The main strategy of Turtle Consensus is to use the best approach available for normal operation in a particular setting and switch to different ``backup protocols'' as soon as the approach becomes inefficient, for example in the case of a DoS attack.
We showed that the approach used sub-optimal strategies 
during DoS attacks, thus bounding the protocol's efficiency to the capabilities 
of these backup protocols.
%In the evaluation of Turtle Consensus it became apparent that this forces the protocol to sub-optimal strategies when under attack, thus bounding the protocol's efficiency to the capabilities of these back-up protocols. 
In addition, an adversary capable of compromising even a single consensus 
participant can learn and use the predetermined nature of the protocols' 
succession to constantly drive the system to sub-optimal executions.

In this paper we address these concerns by adding another degree of 
freedom in the reconfiguration capabilities of Turtle Consensus. We present 
Moving Participants Turtle Consensus (MPTC), an extension to the Turtle 
Consensus protocol that allows switching not only the protocols but also the 
set of processes on which they run across different rounds of a single 
consensus instance. The consensus protocol round and the processes 
participating in its execution form what we call a \emph{configuration}, which 
our approach changes unpredictably at each round. While the configuration 
selection for each round is predetermined by a trusted dealer, it is unknown to 
the processes during MPTC execution. Using existing cryptographic techniques, 
we ensure that, only if sufficiently many processes collaborate during some 
round, the next round's configuration can be determined. This renders MPTC a 
valuable tool for building systems that can tolerate DoS attacks in both crash-
and Byzantine-tolerant environments where a bounded portion of the system may 
be compromised.

\begin{comment}
The paper is organized as follows: In Section~\ref{sec:mptc_model} we 
describe the system model and provide some background on the cryptographic 
primitives we use as well as consensus protocols. In 
Section~\ref{sec:mptc_protocol} we present our proposed MPTC protocol for crash 
failures and demonstrate its correctness. (We extend this description to 
address byzantine failures in Appendix~\ref{sec:mptc_byzantine}.) Then in 
Section~\ref{sec:mptc_implementation} we present an implementation of MPTC and 
describe how we used it to implemented a state machine replication protocol. We 
describe our experimental results in Section~\ref{sec:mptc_evaluation}. In 
Section~\ref{sec:mptc_related_work} we present some related work on 
reconfigurable consensus and DoS attacks. Finally, we make our concluding 
remarks in Section~\ref{sec:mptc_conclusions}.
\end{comment}

% System Model
\section{Model}\label{sec:mptc_model}

\subsection{Processes and communication}
Our system consists of a set of processes $\mcn$ that 
communicate using message passing. Each process is modeled 
as a state machine with a potentially unbounded set of states that executes 
% deterministic or non-deterministic
transitions according to some protocol. The 
protocol specifies the transition function of the processes as well as the 
messages they exchange. Each process's state consists of two components, the 
\emph{public} and the \emph{private} or \emph{secret state}. The public state 
contains the description of the protocol that each process executes and any 
public cryptographic keys associated with the process. The secret state 
contains any run-time state the process manages during the execution of the 
protocol as well as any secret cryptographic keys and/or shares associated with 
the process. %We assume that all keys and shares are stored in a tamper-proof 
%cryptographic co-processor that executes all operations involving these keys. 

% REMOVED WEAK SYNCHRONY; NOW ASSUMPTION OF THE UNDERLYING PROTOCOLS
%Protocol execution and communication are \emph{weakly synchronous} meaning 
%that while there are bounds on the time it takes processes to execute 
%transitions and deliver messages, these bounds are not known to the 
%processes. This weak synchrony assumption is stronger than asynchrony but much 
%weaker than synchrony which assumes knowledge of the previous bounds by the 
%processes.
Protocol execution and communication are \emph{asynchronous}, meaning 
that there are no bounds on the time it takes processes to execute 
transitions and deliver messages.

A process can be correct or faulty. A correct process faithfully 
executes the protocol and is guaranteed to make progress as long as the 
conditions specified by the protocol at any given step are eventually met.
In this paper we will primarily consider crash failures, although we
extend our techniques to Byzantine failures in Appendix~\ref{sec:mptc_byzantine}.
A faulty process may crash at any time after which it stops executing the 
protocol. Up to the point of the crash, processes 
faithfully follow the steps of the protocol.
Communication between correct 
processes is reliable and secure. This means that,
in the absence of a DoS attack (see below), messages sent by some correct 
process to another correct process are eventually delivered. It also means that 
messages are authenticated and cannot be tampered with or fabricated.
% In this work we are not concerned with the recovery of failed processes.
We assume an 
upper bound, $f_c < |\mcn|$, on the total number of processes that might fail 
during the protocol's execution.

Finally, we assume the existence of a special process $T \notin \mcn$ that
from now on we will refer to as \emph{trusted dealer} or simply dealer. The 
dealer is only used during initialization of the system during which it 
generates the initial public and secret state of all processes. We assume that 
during this setup phase the dealer is correct, that it can communicate via 
secure channels with any process in the system, and that it does not disclose
its state. After initialization, however, the dealer does not execute any 
protocol steps or exchange messages with any other process, and the dealer's 
state is destroyed.

\subsection{Adversary and attacks}
\label{ssec:mptc_adversary}
We assume an adversary, $A$, that controls which processes fail and 
when. $A$ is limited on the number of processes it can fail by $f_c$ and cannot 
fail the dealer. $A$ can also control the delivery order of messages of all 
processes as well as delay communication, but must yield to the previously 
stated reliable communication assumption.

The adversary can also issue \emph{denial of service} (DoS) attacks against the 
system that can fully saturate the bandwidth resources of at most $f_a < 
|\mcn|$ correct processes. This can effectively prevent the targeted processes 
from progressing in the protocol's execution since they can no longer 
communicate with the rest of the system. $A$ can change the targets of an 
attack over time and, in this way, can introduce communication and computation 
delays on certain processes. The adversary's objective is to prevent the system 
from making progress. From now on we will denote by $f$ the maximum number of 
processes that can be crashed or under attack during the execution of the 
protocol, that is $f = f_c + f_a < |\mcn|$.

In this work, we ignore DoS attacks that target other 
resources like CPU using legitimate traffic. These attacks can be mitigated using rate limiting 
techniques such as client cryptographic puzzles~\cite{ANL01}.

The adversary has read access to the public state of all processes as well as 
the secret state of up to $f$ processes. We call the processes whose secret 
state is disclosed to $A$ \emph{compromised}. While $A$ cannot modify this 
state, it can use this state to select the target processes of a DoS attack. 
Once $A$ has selected the set of compromised processes it can no longer change 
that set thus preventing $A$ from accessing the secret state of more than $f$ 
processes. Note that the set of compromised processes is not necessarily 
related to the set of processes that are crashed or under attack.
% which would violate the security guarantees we rely upon in our protocol. 
%%%% FOR REASONING ABOUT DYNAMIC CORRUPTION %%%%
\begin{comment}
This static approach to 
compromising processes might seem restrictive and unrealistic. We 
can make our model more realistic by adopting the approach in~\cite{Zhou01} 
where the adversary can change the set of compromised processes over time. This 
requires defining a window of vulnerability during which the adversary can 
compromise up to $f$ processes but can change its selection across different 
windows. We can then use proactive refreshing of cryptographic keys that runs 
periodically to recover compromised processes across different windows. We will 
omit this discussion however, since this is not the focus of this work.
\end{comment}

Finally, we assume that the various cryptographic schemes we are 
employing, like public key cryptography and threshold signatures, are secure in 
the random oracle model. 
%This requires any computations performed by the adversary and the rest of the 
%processes to be polynomially bounded in some security parameter related to our
%cryptographic schemes.

\subsection{Cryptographic primitives}
\label{ssec:crypto_prims}
Our protocol relies on Threshold Coin Tossing~\cite{Cachin05}.
Here we present a high-level description 
of this primitive that we will further formalize in Section~\ref{sec:mptc_protocol}.
% \subsubsection{Threshold coin-tossing}
We employ an $(n, f+1, f)$ threshold coin-tossing scheme 
in which $n$ parties maintain shares of an unpredictable function, $F$, mapping 
an arbitrary bit string, $r$, to a binary value $\{0, 1\}$. Each of these 
shares can be used along with an input $r$ to create a value that from now on 
we will refer to as \emph{function shares}\footnote{The term used in 
\cite{Cachin05} for these values is coin shares}. At least $f+1$ of these 
function shares of $r$ are required to reconstruct the result $F(r)$, while at 
most $f$ parties may be compromised. We will use the term \emph{function share 
of} $F(r)$ to denote a function share of $r$ generated with a secret share of 
$F$.

The scheme defines three functions: 1) The \emph{split} function, which takes 
as input a function $F$ (represented as a bit string) and creates a set of 
shares as well as a verification key for each of these shares. 2) The share 
combining function, \emph{combine}, which takes an input $r$ of $F$ along with 
$f+1$ valid function shares of $r$ and produces $F(r)$. 3) The share 
verification function \emph{verify}, which takes an input $r$ of $F$, a 
function share on $r$, and the verification key corresponding to the share that 
generated the input function share and determines whether the function share is 
valid.

This scheme is based on threshold signatures~\cite{Shoup00} and can be used to 
create an unpredictable sequence of bits while ensuring that it is 
computationally infeasible for the adversary to produce an input $r$ and $f+1$ 
valid function shares that once combined do not yield $F(r)$. More formally, 
the scheme satisfies the following properties taken from~\cite{Cachin05}:

\begin{itemize}
	\item \emph{Robustness}: It is computationally infeasible for the adversary 
	to produce a value $r$ and $f+1$ valid shares of $r$ such that the result 
	of the combine function is not $F(r)$.
	\item \emph{Unpredictability}: Given a value $r$ and functions shares from 
	fewer than $f+1 - f$ correct processes, the adversary can predict the value 
	of $F(r)$ with probability at most $\frac{1}{2}+\epsilon$ for negligible 
	value $\epsilon \in \mathbb{R}$.
\end{itemize}

The previous unpredictability property was extended to sequences of output bits 
in~\cite{Cachin05}, such that, given a sequence of values $C_i$ for $i \in 
\{1,2, \ldots, b\}$, an adversary with fewer than $f+1 - f$ valid shares of 
some $C_i$ has negligible advantage in predicting $F(C_i)$. From now on, when 
we talk about unpredictability we will refer to this \emph{extended 
unpredictability property} of threshold coin-tossing.

Note that the previously described extended unpredictability property allows us 
to share unpredictable functions in $[\{0,1\}^* \rightarrow \{0,1\}^b]$ for any 
finite $b$. In other words, we can model each such function as a random number 
generator that can produce $2^b$ different values and requires $f+1$ processes 
to collaborate in order to produce the random (unpredictable) value 
corresponding to some arbitrary bit string $r$.
\begin{comment}
Also note that from now on, whenever we refer to sharing of functions or shares of functions we mean the unpredictable functions that we described in this section. 
\end{comment}

Threshold coin-tossing can be implemented using any non-interactive threshold signatures scheme that ensures unique valid signature per message as in~\cite{Shoup00}. A direct implementation of this scheme can be found in~\cite{Cachin05}.

\subsection{Underlying consensus protocols}
MPTC, like other consensus protocols, solves the problem of agreement.
In this problem, a set of possibly distributed processes, each of which is 
initialized with some input value, unanimously and irrevocably output one of 
those input values. More formally, let $\mcn$ be a set of processes each of 
which is initialized with some value from a value set $\mcv$. Each process can 
employ either of the following primitives: 

\begin{itemize}
	\item \emph{propose} a value which allows a process to communicate its 
	value to the rest of the processes in $\mcn$,
	\item \emph{decide} a value which allows a process to output a value.
\end{itemize}

Every correct consensus protocol must satisfy the following properties:

\begin{itemize}
	\item \emph{Validity}: If a process decides a value, then that value must 
	be the input value of some process in $\mcn$.
	\item \emph{Agreement}: If any two processes decide they must decide the 
	same value.
	\item \emph{Termination}: All correct processes eventually decide.
\end{itemize}

%In our protocol each consensus round is run by a subset of the system's 
%processes. It is possible that in an infinite execution some correct processes 
%may execute a finite number of consensus protocols rounds. As a result the 
%previous termination properties may be impossible to satisfy since only those 
%processes running sufficiently many rounds may be able to decide. To simplify 
%the correctness discussion of MPTC we will slightly modify the termination 
%property by employing Lamport's description of consensus~\cite{Lamport01} in 
%which he only required that the decision (chosen value) can be eventually 
%learnt by all correct processes. Note that to achieve this there must be at 
%least one correct process that eventually decides. The rest of the correct 
%processes can then trivially learn this decision by having the decided process 
%send its decision to the remaining correct processes in $\mcn$ and by relying 
%on reliable communication. We change the termination property as follows:
%
%\begin{itemize}
%	\item \emph{Termination}: At least one correct process eventually 
%	decides.
%\end{itemize}
%
%Note that the previous two definitions are in fact equivalent. If all correct 
%processes decide then obviously at least one must have and by the previous 
%argument if one correct process decides all correct processes can eventually do 
%so as well.

\cite{FLP85} has shown that in an asynchronous environment no consensus 
protocol exists that satisfies all of the above properties when even only a 
single failure can occur. To circumvent this result, a variety 
of protocols have been proposed \cite{BenOr83,CIL94} that use a 
probabilistic approach and can guarantee the previous properties with the 
following modification on termination: \emph{All correct processes eventually 
decide with probability 1}. 
%With our previous alternative definition termination can be stated as follows: \emph{At least one correct process eventually decides with probability 1}. 
For the remainder of this work we will refer to the non-probabilistic 
description of termination as \emph{definite termination} and to the 
probabilistic one as \emph{probabilistic termination}.

A consensus protocol that implements the previous specification (using either 
definite or probabilistic termination) even under the presence of $f$ crash 
failures is called $f$-crash-resilient. Note that our adversary can 
additionally perform denial-of-service attacks which can fully saturate a 
bounded number of processes and render them entirely unavailable. In an 
asynchronous environment there is no difference between a crashed process and a 
process that is under DoS attack from the other processes' perspective. For 
this reason we say that a consensus protocol is correct in our model if it is 
$f$-crash-resilient where $f=f_c + f_a$. From now on we will refer to such 
consensus protocols as $f$-resilient protocols.

%% The discussion below is not necessary since we can assume a computationally 
%% unbounded adversary due to our secure communication assumption
%It is worth noting that the previous termination criterion poses an issue when
%combined with the cryptographic framework we use in this work. The problem 
%stems from the fact that most cryptographic schemes are secure under the 
%assumption of a computationally bounded adversary. Most formalisms of 
%termination however involve infinite executions of the corresponding consensus 
%protocol which does not work under the computationally bounded assumption. We 
%can deal with this issue by employing the secret share refreshing protocols 
%from~\cite{Zhou01} with appropriately large window of vulnerability in order to
%prevent the adversary from eventually divulging secret information of non-
%compromised processes.

Each process executing MPTC may run different consensus protocols at different 
rounds. We denote the set of possible protocols each process can choose from by 
$\mcp$. Different consensus protocols make different assumptions under which 
they meet the previously described specification. The crash-tolerant consensus 
protocol of Ben-Or~\cite{BenOr83}, for instance, assumes an asynchronous 
environment and that each infinite schedule has a bounded number of processes 
performing a finite number of steps. Other protocols make assumptions such as 
bounds on the number of failures, different degrees of synchrony, the existence 
of failure detectors~\cite{CT96}, etc. We consider a consensus protocol correct 
if it satisfies agreement, validity and either definite or probabilistic 
termination. For each protocol $P \in \mcp$, we denote the set of assumptions 
required to hold for $P$ to be correct by $\mca_P$. In other words, if 
assumptions $\mca_P$ hold, then $P$ satisfies validity, agreement, and 
termination. A protocol $P$ is a valid candidate for $\mcp$ if it is correct 
under both the assumptions in $\mca_P$ and our previous model assumptions 
regarding failures, network reliability, and adversary.

We only consider consensus protocols operating in rounds and we follow the 
framework introduced in~\cite{NvR15} for the specification of the round 
outcomes. According to this specification, every process running a round of a 
consensus protocol ends up in one of the following states:
$\{D, U, M\} \times \mcv$,
where states $(D, v), v \in \mcv$ indicate that the process has decided $v$, 
states $(U, v), v \in \mcv$ indicate that no process has decided up to the 
current round, and finally, states $(M, v), v \in \mcv$ indicate that while the 
process is not decided, if a decision was made by some process then it must 
have been $v$. We will refer to these states as round outcomes or simply 
\emph{outcomes}. We denote by $o_p^r$ the outcome of process $p \in \mcn$ at 
the end of round $r \in \mathbb{N}$. 

More formally the following invariants hold about the outcomes of processes 
completing a round of a correct consensus protocol in $\mcp$:

\begin{invariant}
\label{inv:decision}
If $\exists p \in \mcn,\ r \in \mathbb{N}$ such that $o_p^r = (D, v)$, where $v 
\in \mcv$, then for each correct $q \neq p \in \mcn$ it holds that $o_q^r = (M, 
v)$ or $o_q^r = (D, v)$.
\end{invariant}

\begin{invariant}
\label{inv:undecided}
If $\exists p \in \mcn$, $r \in \mathbb{N}$ such that $o_q^r = (U, v)$ for some 
$v \in \mcv$ then $\forall q \in \mcn$, $u \in \mcv$: $o_q^r \neq (D, u)$.
\end{invariant}

This framework facilitates the description of MPTC in the next section and
can be used to describe most consensus protocols in literature, including 
\cite{BenOr83,CT96,Lamport98}.

\subparagraph{Problem}
Our goal is to design a round-based consensus protocol that is correct under 
the previous system and adversary assumptions and that runs a different 
existing consensus protocol on a different set of processes each round. The 
selection of protocols and processes for each round must not be predictable by 
the adversary without the collaboration of correct processes. For the purposes 
of this work, unpredictability is as described in 
Section~\ref{ssec:crypto_prims}.

% Algorithms
\section{Moving Participants Turtle Consensus}
\label{sec:mptc_protocol}

In this section we describe our Moving Participants Turtle Consensus (MPTC) 
protocol. MPTC is an $f$-resilient consensus protocol operating in rounds such 
that in each round a different subset of processes may run a different 
consensus protocol. We start with some preliminary definitions and notation
and then describe the protocol.
% and finally sketch its correctness.

\subsection{Participants and participant sets}
MPTC is run by all processes in $\mcn$. In each round, only a subset of $\mcn$ 
is actively running a consensus protocol from a set of correct consensus 
protocols, $\mcp$. Let $\mcp_f$ correspond to the minimum number of processes 
required to run each protocol in $\mcp$. As an example, let $\mcp$ consist of 
the Ben-Or~\cite{BenOr83} and One-Third~\cite{CS09} consensus protocols. The 
first one requires $2f+1$ processes to solve the agreement problem tolerating 
up to $f$ crash failures while the second one needs $3f+1$. Thus $\mcp_f = 
3f+1$. We assume that $|\mcn| \gg f$ and thus $|\mcn| > \mcp_f$ for most 
$f$-resilient consensus protocols.
%Depending on the consensus protocol under execution a process might play 
%multiple roles e.g. in Synod of Paxos~\cite{Lamport98} a process might play the 
%role of an acceptor, a proposer or a learner. We will use Lamport's terminology 
%and refer to these roles as agents. Therefore, a different number of agents may 
%be distributed among the $\mcp_f$ processes running different protocols in 
%$\mcp$.

In the remainder of this paper, we say that a process \emph{runs} or 
\emph{executes} a protocol in $\mcp$ when it executes a round of that protocol. 
We will refer to a process executing a protocol in $\mcp$ at some round of MPTC 
as a \emph{participant} or an \emph{active participant} of that round. Let $PS 
= \{S \subseteq \mcn : |S| = \mcp_f \}$ be the set of all possible subsets 
of $\mcn$ where each subset has size $\mcp_f$. We call each such set a 
\emph{participant set}. A process may be a member of multiple participant sets. 
In each round $r$ of MPTC, only a single participant set, $S_r$, is 
\emph{active}, that is executing a consensus protocol in $\mcp$. We assume that 
participants in each participant set of some round $r$, $S_r \in PS$, are 
ordered and denote the $i$-th participant in $S_r$ as $S_r^i$. The active 
participant set for each round is determined by $T$ during initialization, 
which we describe later in this section.

\subsection{Configurations}
Before describing the initialization procedure and the core of MPTC, we need to 
define an important concept that encapsulates the information required for a 
set of processes to run a consensus protocol. We define a \emph{configuration} 
of MPTC as a tuple $(P, S) \in \mcp \times PS$. $P \in \mcp$ describes the 
consensus protocol to run along with its initialization parameters. To better 
understand the information contained in the initialization parameters, consider 
a protocol like Lamport's Paxos~\cite{Lamport98} and the core consensus 
protocol he called Synod. In Synod, processes play multiple roles, such as 
proposers and acceptors. In that sense, $P$ needs to encapsulate not only the 
protocol under execution, e.g. Synod, but also information related to its 
initialization such as mapping of proposers and acceptors to processes. The 
participant set $S \in PS$ corresponds to the set of processes that shall 
execute the consensus protocol specified by $P$. Let the set of all possible 
configurations $\mcc = \mcp \times PS$. Our approach implements a multi-party 
computation scheme for an unpredictable mapping between natural numbers 
(rounds) and configurations. We omit details regarding how to 
represent $P$ since this is an implementation issue and does not affect our 
protocol. We assume that $|\mcc|$ is bounded.

\subsection{Initialization and trusted dealer}
We are now ready to describe the initialization of our protocol, how we are 
using $T$ to create an unpredictable sequence of configurations, and how the 
active participants of a round can compute the corresponding configuration.

$T$ is a special process that generates the configuration that each process in 
$\mcn$ starts with in the first round. It also provides the processes the means 
to generate configurations for subsequent rounds. To achieve this, $T$ employs 
a $(\mcp_f, f+1, f)$ threshold coin tossing scheme like the one described in 
Section~\ref{ssec:crypto_prims}. Using this scheme, $T$ shares a function $F_S$ 
between the $\mcp_f$ processes of each participant set $S \in PS$. Recall 
that threshold coin-tossing can be implemented using threshold signatures, thus when we say that $T$ shares a function $F_S$ with each participant set, in reality it simply selects a different public-secret key pair for each $S \in PS$ and shares the secret key. Given some round number $r$, at least $f+1$ processes in $S$ need to collaborate to produce $F_S(r)$ while up to $f$ of them may get compromised. $f+1$ is both a sufficient and necessary number of processes to compute the result of the function shared. $T$ cannot be compromised, failed or attacked by the adversary.

At a high level, $T$ operates as follows:

\begin{enumerate}
    \item For each $S \in PS$ the dealer picks a function $F_S: \{0,1\}^* 
    \rightarrow \mcc$ and generates a secret share, $h_S^q$, for each $q \in 
    S$.
    \item $T$ picks a configuration $C_0 \in \mcc$.
    \item $T$ distributes $C_0$ and shares to processes over secure channels.
    $\forall S \in PS$ each process $p \in S$ receives $h_S^p$ and $C_0$.
\end{enumerate}

Observe that each function shared by the dealer maps arbitrary strings to 
configurations. This differs from the functions we defined in 
Section~\ref{ssec:crypto_prims} which map arbitrary bit strings to bit strings 
of some finite length $b$. Since $\mcc$ is finite, there exists $b = \lceil 
log_2|\mcc| \rceil$ such that we can trivially obtain an onto function 
$\{0,1\}^b \rightarrow \mcc$. Thus, the functions we need to share can be 
trivially obtained by the ones supported by the threshold coin-tossing scheme. 
Note that, by this high-level algorithm, a process in $\mcn$ will receive 
multiple shares, one for each participant set it belongs to. The dealer selects 
each $F_S$ such that the output is computationally indistinguishable from a 
randomly chosen function.

%We now discuss how $T$ generates the secret shares. Given model parameters 
%$\mcp$ and $f$, we define the \textit{split} function of the threshold 
%coin-tossing scheme presented in Section \ref{ssec:crypto_prims} as follows:
%\begin{equation}
%\textit{split}: [\{0,1\}^* \rightarrow \mcc] \rightarrow \mcs^{\mcp_f} 
%\nonumber
%\end{equation}
%
%\noindent where $[\{0,1\}^* \rightarrow \mcc]$ is the space of all functions 
%from arbitrary bit strings to configurations, and $\mcs$ is the space of secret 
%shares. \textit{split} breaks the input function into $\mcp_f$ shares one for 
%each process of some participant set. In other words, $\forall S \in PS$, 
%\textit{split}$(F_S) = \{h_S^p\ |\ p \in S\}$.
We now discuss how $T$ generates the secret shares. Given model parameters $
\mcp$ and $f$, $T$ generates a different set of secret key shares for each 
subset, $S \in PS$. Each such set of secret key shares implicitly defines a 
function $F_S$ mapping bit strings to configurations. We call this operation
\textit{split} and it is similar to the threshold coin-tossing dealer 
initialization described in~\cite{Cachin05}. \textit{split} can be implemented 
using Shamir's secret sharing~\cite{Shamir79} $(\mcp_f, f+1)$. Note that 
the implementation in~\cite{Cachin05} is based on verifiable secret sharing 
because they are considering Byzantine failures. In our model, processes cannot 
lie and messages cannot be tampered with. As a result, no verification is 
needed within this context. We extend our approach to byzantine failures in 
Appendix~\ref{sec:mptc_byzantine} where we introduce the required verification 
functionality.

Given a secret share, $h_S^p$, of some function $F_S: \mathbb{N} \rightarrow 
\mcc$ and some input, $r \in \mathbb{N}$, process $p \in S$ can create a 
function share of $F_S(r)$ using the \emph{share generation} function, $GFS: 
\mcs \times \mathbb{N} \rightarrow \mcf$ where $\mcf$ is the space of valid 
function shares that can be generated given a share $h \in \mcs$ and a natural 
number. We define, $F_S^p(r) = GFS(h^p_S, r)$. A straightforward implementation 
of $GFS$ can be derived from the signature share generation for threshold 
signatures~\cite{Shoup00}.

We define the \textit{combine} functions as:
\begin{equation}
\textit{combine}: \mcf^{f+1} \times \mathbb{N} \rightarrow \mcc \nonumber
\end{equation}

\noindent \textit{combine} works by receiving function shares of some function $F_S$ and some input, $r$ and outputting $F_S(r)$ which corresponds to a configuration. More formally, let 
%$F_S^Q(r) = \{F_S^q(r) \in \mcf | \forall q \in Q, Q \subseteq S \textit{ and } |Q| = f+1 \} \in \mcf^{f+1}$ be any set of $f+1$ function shares of $F_S(r)$. Then, \textit{combine}$(F_S^Q(r)) = F_S(r)$.
\begin{equation}
F_S^Q(r) = \{F_S^q(r) \in \mcf\ |\ \forall q \in Q, Q \subseteq S \textit{ and 
} |Q| = f+1 \} \nonumber
\end{equation}

\noindent be any set of $f+1$ function shares of $F_S(r)$, i.e. $F_S^Q(r) \in 
\mcf^{f+1}$. Then we have that:
\begin{equation}
\textit{combine}(F_S^Q(r), r) = F_S(r) \nonumber
\end{equation}

\subsection{Protocol description}
\label{ssec:prt_dsc}

We can now describe the operation of each process executing MPTC. MPTC is an 
$f$-resilient round-based consensus protocol in which each round is executed 
under a different configuration. Let $C_r = (S_r, P_r) \in \mcc$ be the 
configuration used for round $r$, where $S_r \in PS$ is the active participant 
set and $P_r \in \mcp$ the consensus protocol specification for that round. Let 
$o_r^p$ be the outcome of a process $p \in S_r$ running $P_r$ at round $r$. Let 
a special value $\bot \notin \mcv \cup \mcc \cup \{\{D, M, U\} \times \mcv\}$ 
represent the value of an uninitialized variable.

We assume that all processes have common knowledge of $\mcn$, $f$, $\mcp$, 
$\mcc$ as well as of the functions \textit{GFS} \textit{combine}. Each process 
$p \in \mcn$ runs MPTC with its identifier and some value $x_p \in \mcv$ as 
input and at any point in time maintains the following state:

\begin{itemize}
    \item its current round number, $r_p$, initialized to 0;
    \item its proposal \textit{proposal}$_p$, initialized to $x_p$;
    \item the outcome of a round, $o_p$ representing $p$'s decision state and 
    initialized to $\bot$ at the beginning of each round; and
    \item the current configuration $c_p$ describing the currently known active 
    participant set and the consensus protocol the active participants execute; 
    it is initialized to $C_0$, which is provided by $T$ during the 
    initialization phase.
    \item the secret shares, $h_S^p$, $\forall S \in PS$ such that $p \in S$ 
    provided by $T$ during initialization.
\end{itemize}

We have organized MPTC description in phases. Messages exchanged between 
processes carry the number of the phase, the id of the sending process, and the 
current round along with the payload. Messages are of the form $\langle 
\textit{phase number}$, $\textit{process id}$, $\textit{round}$, $\ldots 
\rangle $. Each round, $r$, of MPTC works in the following 3 phases:

\begin{itemize}
    \item \textbf{Phase 1}: Each process $p \in S_r$ runs a round of the 
    consensus protocol specified by $C_r$. 
    Let $o_p$ be $p$'s outcome for round $r$. If $o_p = (D, v)$, then process 
    $p$ updates $\textit{proposal}_p=v$, decides $v$ and never updates $o_p$ 
    and $\textit{proposal}_p$ again in any future round. If $o_p = (M, v)$, 
    then $p$ updates $\textit{proposal}_p=v$. Regardless of $o_p$'s value, $p$ 
    goes to Phase 2.
    \item \textbf{Phase 2}: 
    \begin{itemize}
        \item \textit{Step 1}: Each $p \in S_r$ computes function share 
        $F_{S_r}^p(r) = GFS(h_{S_r}^p, r)$ and sends a Phase 2 message 
        $\langle 2, p, r_p, o_r^p$, $F_S^p(r) \rangle$ to all processes in 
        $S_r$. Then $p$ waits for Phase 2 messages from $\mcp_f - f$ processes 
        in $S_r$. Once $p$ receives enough messages from some $Q \subseteq 
        S_r$, it proceeds to Step 2.
        \item \textit{Step 2}: If $o_p = (U, *)$ where $*$ can be any value in 
        $\mcv$, then $p$ updates its proposal to a value $v$, selected 
        arbitrarily from the outcomes contained in the received Phase 2 
        messages. It also updates $o_p = (U, v)$.
%        \item \textit{Step 2}: Once $p$ receives Phase 2 messages from a set of 
%        $\mcp_f - f$ processes, $Q \subseteq S_r$ and if $o_p = (U, *)$, then 
%        it updates its outcome and proposal as described below. Let $R$ denote 
%        the set of outcomes received:
%        \begin{itemize}
%	        \item Case 1: If $\exists o \in R$ such that $o = (D, v)$, then 
%	        process $p$ updates $\textit{proposal}_p=v$, decides $v$, sets its 
%	        outcome $o_p = (D, v)$ and never updates $o_p$ and 
%	        $\textit{proposal}_p$ again in any future round.
%	        \item Case 2: If $\forall o \in R$ it holds $o = (M, v)$ for some 
%	        $v \in \mcv$ then $o_p = (M, v)$ and $\textit{proposal}_p = v$.
%	        \item Case 3: If $(U, *) \in R$ where $*$ can be any value in 
%	        $\mcv$ or $\exists o, o^\prime \in R$ and $v, v^\prime \in \mcv$ 
%	        such that $(o = (M, v) \wedge o^\prime = (M, v^\prime)) \wedge (v 
%	        \neq v^\prime)$, then $p$ selects an arbitrary outcome $(*, v) \in 
%	        R$ where $*$ can be any value in $\{M, U\}$ and updates $o_p = (U, 
%	        v)$ and $\textit{proposal}_p = v$.
%        \end{itemize}
        \item \textit{Step 3}: Let $F_S^Q(r)$ be the set of function shares 
        received from processes in $Q$. $p$ computes the configuration of the 
        next round, $r+1$, as $C_{r+1} = \textit{combine} (F_{S_r}^Q(r), r)$ 
        and moves on to Phase 3.
    \end{itemize}
    \item \textbf{Phase 3}: Each $p \in S_r$ sends a Phase 3 message $\langle 
    3, p, r_p, o_p, C_{r+1} \rangle$ to each process in $S_{r+1}$. $p$ updates 
    its state: $r_p = r + 1$, $c_p = C_{r+1}$ and if it is still undecided, it 
    updates its outcome, $o_p = \bot$. Each process $q \in S_{r+1}$ that 
    receives Phase 3 messages with the same configuration value, $C_{r+1}$, 
    from $\mcp_f - f$ processes, updates its proposal as follows. Let $R$ 
    denote the set of outcomes received:
        \begin{itemize}
	        \item Case 1: If $\exists o \in R$ such that $o = (D, v)$, then 
	        $q$ updates $\textit{proposal}_q=v$, decides $v$, sets its outcome 
	        $o_q = (D, v)$ and never updates $o_q$ and $\textit{proposal}_q$ 
	        again in any future round.
	        \item Case 2: If $\forall o \in R$ it holds $o = (M, v)$ for some 
	        $v \in \mcv$ then $\textit{proposal}_q = v$.
	        \item Case 3: Otherwise, $q$ selects an arbitrary outcome $(*, 
	        v) \in R$ where $*$ can be any value in $\{M, U\}$ and updates 
	        $\textit{proposal}_q = v$.
        \end{itemize}
    Then $q$ sets $r_q = r+1$, $c_q = C_{r+1}$ and if it is 
    still undecided, it sets $o_q = \bot$. Finally, it starts the next round.
\end{itemize}

\begin{wrapfigure}{r}{0.5\textwidth}
  \vspace{-30pt}
  \begin{center}
    \includegraphics[width=0.5\textwidth]{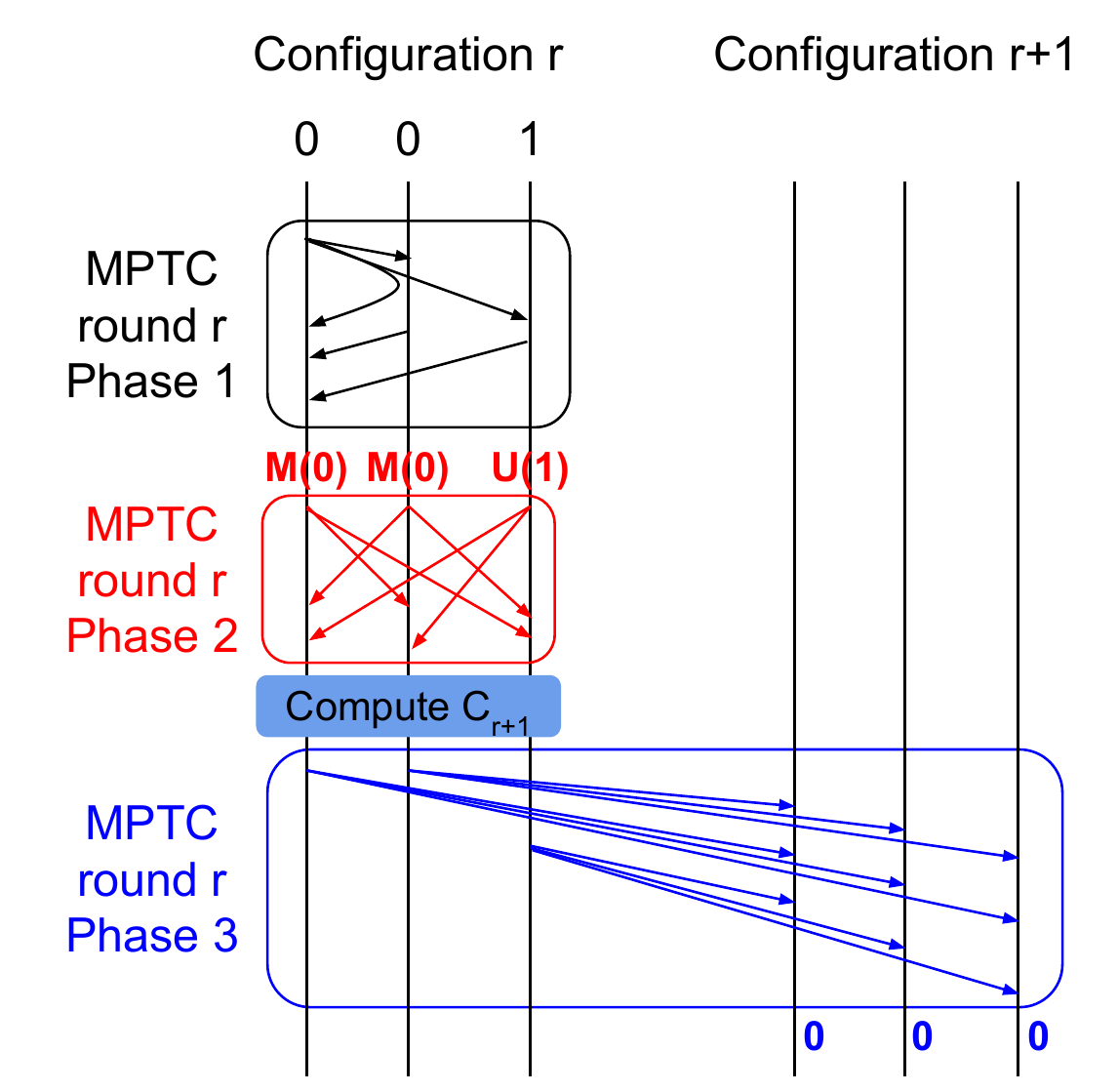}
  \end{center}
  \caption{A round of MPTC consensus.}
  \label{fig:mptc_protocol}
  \vspace{-35pt}
\end{wrapfigure}

Figure~\ref{fig:mptc_protocol} shows a visualization of the previous round 
description. MPTC runs for an unbounded number of rounds and eventually reaches 
a state in which a decision is made and all correct processes can eventually 
learn this decision. Messages from old rounds, either delayed in the network or 
sent by slow processes, are ignored while messages from future rounds are 
queued to be processed when the receiver reaches that round. The correctness of 
the previous protocol is presented in Appendix~\ref{ssec:mptc_proofs}. In 
addition, in Appendix~\ref{sec:mptc_byzantine} we present a byzantine-tolerant 
version of the previous protocol along with its own correctness discussion.

\section{Implementation}\label{sec:mptc_implementation}

In this section, we describe a simple implementation of a non-byzantine version 
of MPTC as well as a state machine replication protocol we built on top of it. 
To implement MPTC we need to decide on the following parameters: the choice of 
protocol set $\mcp$, the set of possible configurations $\mcc$, the 
configuration selection functions $F_S$, $\forall S \in PS$, generated by the 
trusted dealer, and the implementations of \textit{split}, \textit{GFS} and 
\textit{combine} functions.

Our set of protocols, $\mcp$, contains only a single consensus protocol, a parameterized version of single decree Paxos~\cite{Lamport98} in which each round comes with a predetermined leader known to all active participants. Paxos tolerates $f$ crash failures using $2f+1$ processes and under failure-free execution conditions, it can reach a decision within a single round-trip of communication. We assume the weakest failure detector, $\diamond \mcw$, presented in~\cite{CHT96} which we implement using timeouts with exponentially increasing timeout periods. This way we ensure that there will be enough rounds executed
%``concurrently''
by sufficiently many processes, which is critical for ensuring termination in our Paxos variant.

The timeouts mentioned above may cause certain processes executing our Paxos variant to exit a round without knowledge of the round's decision. Such processes need to retrieve this knowledge from the rest of the processes. To avoid incurring another round of communication in our Paxos variant, we piggyback this decision state retrieval onto Phase 2 of MPTC. Timed out processes can use the set of outcomes received to update their proposal.

Our set of configurations is $\mcc = \{(S, P)\ |\ S \in PS \text{ and } |S| = 
2f + 1\}$ where $P \in \mcp$ is the described Paxos variant. Observe that in 
contrast to prior work on Turtle Consensus~\cite{NvR15} we use the same 
protocol across configurations. In Turtle Consensus, different configurations 
used the same $2f+1$ set of processes. As a result, the adversary could try to 
track the current leader within that set of processes even if the leader 
changed across different configurations. Therefore, a competent adversary could 
eventually locate and force Turtle Consensus rounds to fail, which can lead to 
poor performance. For that reason, Turtle Consensus kept switching between a 
leader-based (Paxos) and fully decentralized (Ben-Or) consensus protocols 
across configurations to prevent the adversary from exploiting the leader 
vulnerability. A side-effect of that approach, however, was that by falling 
back to a less efficient protocol (Ben-Or) it only achieved sub-par performance 
compared to the graceful execution using only Paxos rounds. With MPTC we do not 
need to employ such tactics since the adversary now needs to scan through $|
\mcn| \gg f$ processes before it can identify the leader of our Paxos 
configuration.

In the implementation that we evaluate in Section~\ref{sec:mptc_evaluation} we 
did not implement the Threshold coin-tossing scheme. We emulated it instead by 
assuming that all participant sets use the same unpredictable function given to 
all processes via a configuration file. This file defines a sequence of 
configurations, one for each round, that processes move to in a round-robin 
fashion. We emulate the restrictions that the cryptographic framework imposes 
on the adversary by assuming that only the processes involved in rounds $r$ and 
$r+1$ can learn $C_{r+1}$ and only after Phase 2 of round $r$ completes.

The interested reader can find an actual implementation of Threshold coin-
tossing in~\cite{Cachin05}. In that work, they used cryptographically secure 
hash functions modeled as random oracles to implement unpredictable functions 
as well as for the \textit{GFS} function. They also used Feldman's verifiable 
secret sharing~\cite{Feldman87} for \textit{split} function, though in our non-
byzantine case Shamir's secret sharing~\cite{Shamir79} can be used instead. 
Finally, for \textit{combine} they use Lagrange interpolation with coefficients 
the computed function shares.

For more details, see Appendix~\ref{sec:protocol_implementation}.

\subsection{MPTC-based state machine replication}
\label{ssec:mptc_smr}

We used the previous implementation of MPTC to build a SMRP, similar to the one 
described in~\cite{NvR15}. While the components of the 
implementation are similar, their interactions are different. There are three 
sets of processes, the clients, the replicas $\mcr$, and the participants 
$\mcn$. The clients issue requests to the participants who order these requests 
and forward them to replicas. Replicas execute the received requests in the 
order established by participants and send the results back to participants who 
then forward them back to clients. Participants can additionally send 
reconfiguration messages to each other in order to update the configuration of 
the MPTC execution.

In greater detail, clients send uniquely identifiable requests to sufficiently many participants in order to ensure that at least one correct participant receives each request. The participants receive requests from clients and are responsible for ordering these requests and send them for execution to the replicas. Only one participant set can be active at any point in time. Any participant outside that set receiving a client request relays that request to the currently known active participant set. Active participants receiving client requests spawn MPTC instances, one for each request that needs to be ordered. Each instance has its own identifier and decided requests are ordered according to the identifiers of the MPTC instances that decided them. Clients can only communicate with participants and thus they are unable to launch DoS attacks on the replicas. MPTC is lazily instantiated for each slot and MPTC messages carry instance identifiers so incoming protocol messages are properly processed by the correct instance. If an instance has not yet been created, messages for that instance are queued and processed when it is created. Finally, there are at least $f + 1$ replicas, each of which maintains a copy of state of the service implemented by the SMRP. All replicas are initialized in the same state and execute the clients' requests in the order determined by id of the consensus instance created by the participants for each request.

For a detailed description of this SMRP implementation see Appendix~\ref{sec:mptc_smr}.

% Evaluation
\section{Evaluation}\label{sec:mptc_evaluation}

In this section we present an evaluation of MPTC using the SMRP protocol 
presented in Section~\ref{ssec:mptc_smr}. In Section~\ref{ssec:mptc_setup} we 
present the experiment setup and in Section~\ref{ssec:mptc_results} the 
performance results of MPTC under different attack scenarios.

\subsection{Setup}
\label{ssec:mptc_setup}

We implemented MPTC and the SMRP described in Section~\ref{ssec:mptc_smr} using 
\verb!C++!. Our testbed consists of 10 nodes in Emulab~\cite{emulab02}, each 
with 8 cores running at 2.4 GHz, with 64GB of memory. For our experiments we 
used $f=1$. Two nodes where designated as replicas, six as participants, one as clients, and one as the attacker. Nodes are connected by 1Gbps switched Ethernet as shown in Figure~\ref{img:mptc_topology}. Note that clients and attacker can only connect to participants, while participants connect to both replicas and clients. This choice was made to disable the attacker from directly attacking the replicas of SMRP and thus degrading performance without attacking the consensus mechanism. All communication between participants takes place through Switch 1 in our topology. Switch 2 is only used for participant to replica communication. We do not allow participants to communicate through Switch 2 since this would prevent the attacker from saturating the participants' bandwidth with respect to the MPTC execution. This would give MPTC an unfair advantage and would not showcase the benefits of its reconfiguration capabilities. All communication is over TCP/IP except for the DoS attack traffic, which is entirely UDP/IP. One of the two client nodes is used by the attacker and the other for creating legitimate client threads. We use a separate node for attacks in order to limit the effect of bandwidth attacks on the clients' ability to issue requests.

%\begin{figure}[!h]
%  \centering
%    \includegraphics[width=0.6\textwidth]{images/mptc_topology}
%  \caption{Experiment topology.}\label{img:mptc_topology}
%\end{figure}

\begin{wrapfigure}{r}{0.5\textwidth}
  \begin{center}
    \includegraphics[width=0.5\textwidth]{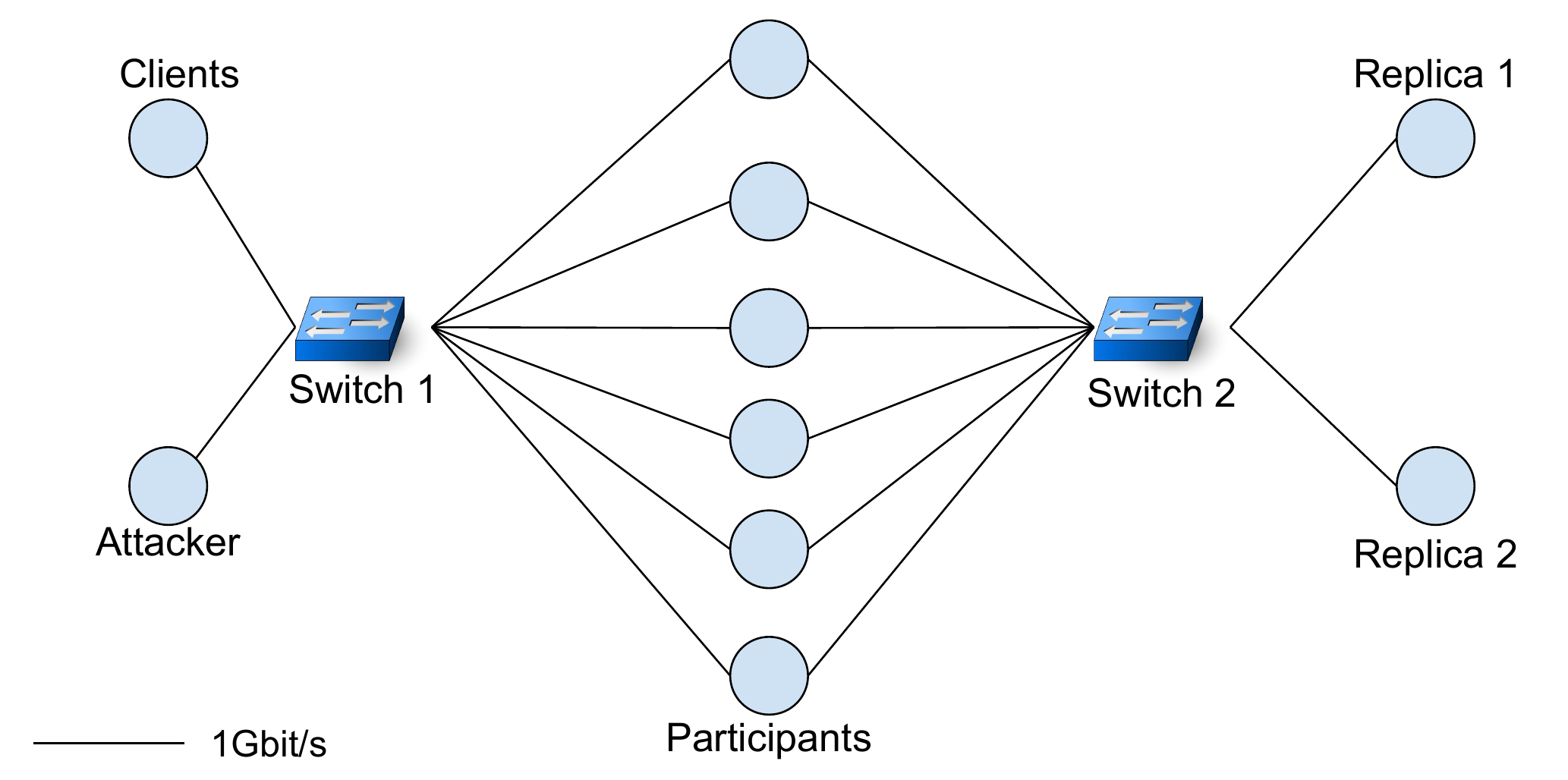}
  \end{center}
  \caption{Experiment topology.}
  \label{img:mptc_topology}
\end{wrapfigure}

To simplify our evaluation, we set $\mcc$ to contain only two configurations 
such that the corresponding participant sets are disjoint. The configuration 
selection function provided by the trusted dealer (in our implementation by a 
configuration file) simply alternates between these two configurations every 
time a round fails. The predetermined Paxos leader of each configuration 
depends on the round in which the configuration is run and is rotated in a 
round-robin fashion every time the same participant set is reused. We consider 
that the attacker does not have this knowledge to make informed decisions 
regarding targeting processes.

Clients first connect to $f+1$ random participants to which they issues 
requests. Once connected, each client executes the following loop: It issues 
each request to all $f+1$ participants, waits for a response, discarding 
duplicate responses, and then sends the next request. Note that by connecting 
to $f+1$ participants, we ensure that each client request reaches at least one 
correct participant who will further forward the request to the active 
participants. We have client requests contain no-ops, which means that when a 
decided request becomes ready for execution, replicas can immediately reply 
with a response.

The attacker creates a small number of attack threads, each of which targets a 
single participant, selects a random port, and sends UDP dummy messages as fast 
as it can. Note that these messages are not requests and are not processed by 
our participants since they never get to the application level. As in the 
Turtle Consensus evaluation~\cite{NvR15}, the goal of the attack is to prevent 
at most one participant from participating in MPTC instances. The attacker can 
focus all threads on the same participant or spread them across different ones. 
Since all attack threads are created on a single node, the aggregate bandwidth 
the attacker threads can saturate from the service cannot exceed 1Gbps.

We conducted experiments to test the throughput and latency of our 
implementation under normal execution and DoS attacks. Both metrics were 
measured at the client side. For throughput we measured the aggregate number of 
operations per second completed by client threads. Note that this is not the 
actual number of instances completed per second by our SMRP implementation 
since the same request might be decided more than once.

Other parameters of our experiment include: 

\begin{itemize}
    \item Duration: Each experiment lasted 1 minute. We found longer experiments did not significantly affect our metrics.
    \item Load: The number of concurrent clients, which ranged in our experiments from 1 to 64.
    \item Request size: The size of the command contained in each client request, which we set to 100 bytes.
    \item Attack message size: The size of the UDP messages send by attack threads to saturate the participants bandwidth; we set that to 1KB since our experimentation with our platform showed it is the smallest message size with the best results for the attacker.
    \item Number of attacker threads: Each run involving a DoS attack had 8 attack threads. We found that this number of threads yields best results for the attacker even when all 64 clients are connected to the target sharing the same link.
    \item Timeout: This is the initial timeout period used in our Paxos variant (Section~\ref{sec:mptc_implementation}) for each MPTC instance. Every time a round of some instance fails we double the timeout period for that instance.
\end{itemize}

\subsection{Results}
\label{ssec:mptc_results}

In our evaluation, we investigated three main scenarios. In the first, we run 
our implementation of MPTC without any attacks taking place. The performance of 
this scenario will be our baseline since any attack scenarios drain resources 
from the system and thus is expected to perform similar or worse. This scenario 
is labeled ``No attacks'' in our figures. 

The second scenario has the attacker focusing the DoS attack on a single node, 
the one that hosts the Paxos leader. This attack depletes the leader's 
bandwidth. In this scenario no reconfiguration occurs. More specifically, we 
assume that in each round of MPTC the exact same configuration is chosen and 
the leader remains the same. Note that this scenario tries to simulate the case 
where the adversary can accurately track and attack the leader of the Paxos 
configuration. While any reasonable implementation of Paxos would change 
leaders among the $2f+1$ processes, we set up the scenario to simplify issuing 
a very efficient attack. In our figures, this scenario is labeled ``Attack 
leader without reconfiguration''.

Finally, the third scenario uses an attacker who like in the previous scenario focuses on a single node. In this scenario the attacker is given the initial position of the leader but this time our implementation uses the MPTC version we described in Section~\ref{ssec:mptc_setup} where consensus instances execution alternates between two disjoint sets of nodes. The attacker strategy here is to saturate the bandwidth of the known leader. It keeps attacking that node for the entirety of the experiment run. This attack is labeled ``Attack leader with reconfiguration''.

%\begin{figure}
%  \centering
%    \includegraphics[width=1\textwidth]{figures/throughput}
%  \caption{Throughput as a function of load and various attack scenarios.}\label{fig:mptc_throughput}
%\end{figure}

%\begin{figure}
%    \centering
%    \begin{subfigure}[b]{0.48\textwidth}
%        \includegraphics[width=\textwidth]{figures/throughput}
%        \caption{Throughput as a function of load and various attack scenarios.}
%        \label{fig:mptc_throughput}
%    \end{subfigure}
%    ~ %add desired spacing between images, e. g. ~, \quad, \qquad, \hfill etc. 
%      %(or a blank line to force the subfigure onto a new line)
%    \begin{subfigure}[b]{0.48\textwidth}
%        \includegraphics[width=\textwidth]{figures/latency}
%        \caption{Latency as a function of load and various attack scenarios.}
%        \label{fig:mptc_latency}
%    \end{subfigure}
%    \caption{Moving Participants Turtle Consensus performance under different attack scenarios.}\label{fig:mptc_performance}
%\end{figure}

\begin{figure}[!ht]
	\captionsetup[subfigure]{justification=centering}
    \centering
    \subfloat[Throughput as a function of load.\label{fig:mptc_throughput}]{%
    	\includegraphics[width=0.48\textwidth]{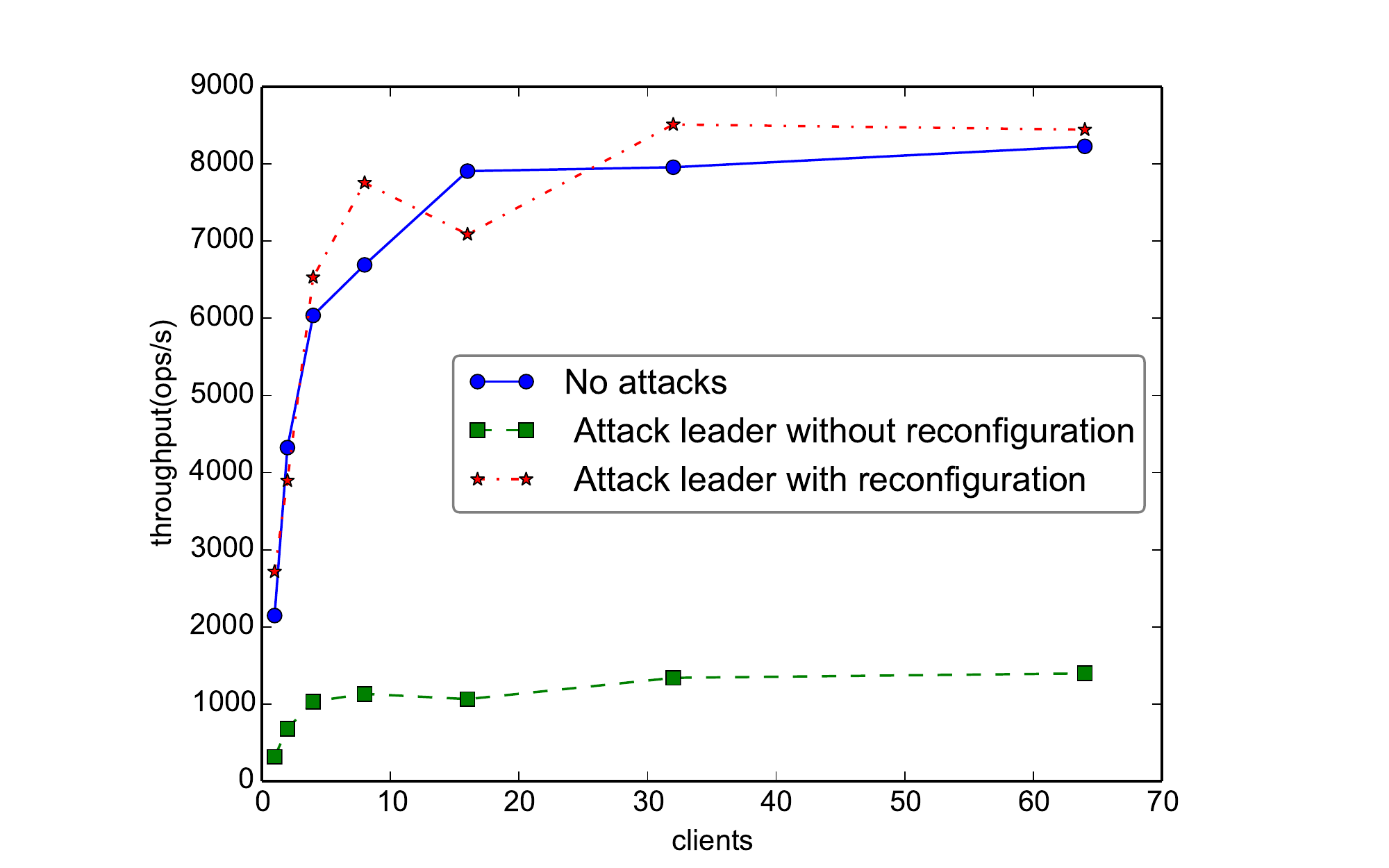}}
    \hfill
    \subfloat[Latency as a function of load.\label{fig:mptc_latency}]{%
    	\includegraphics[width=0.48\textwidth]{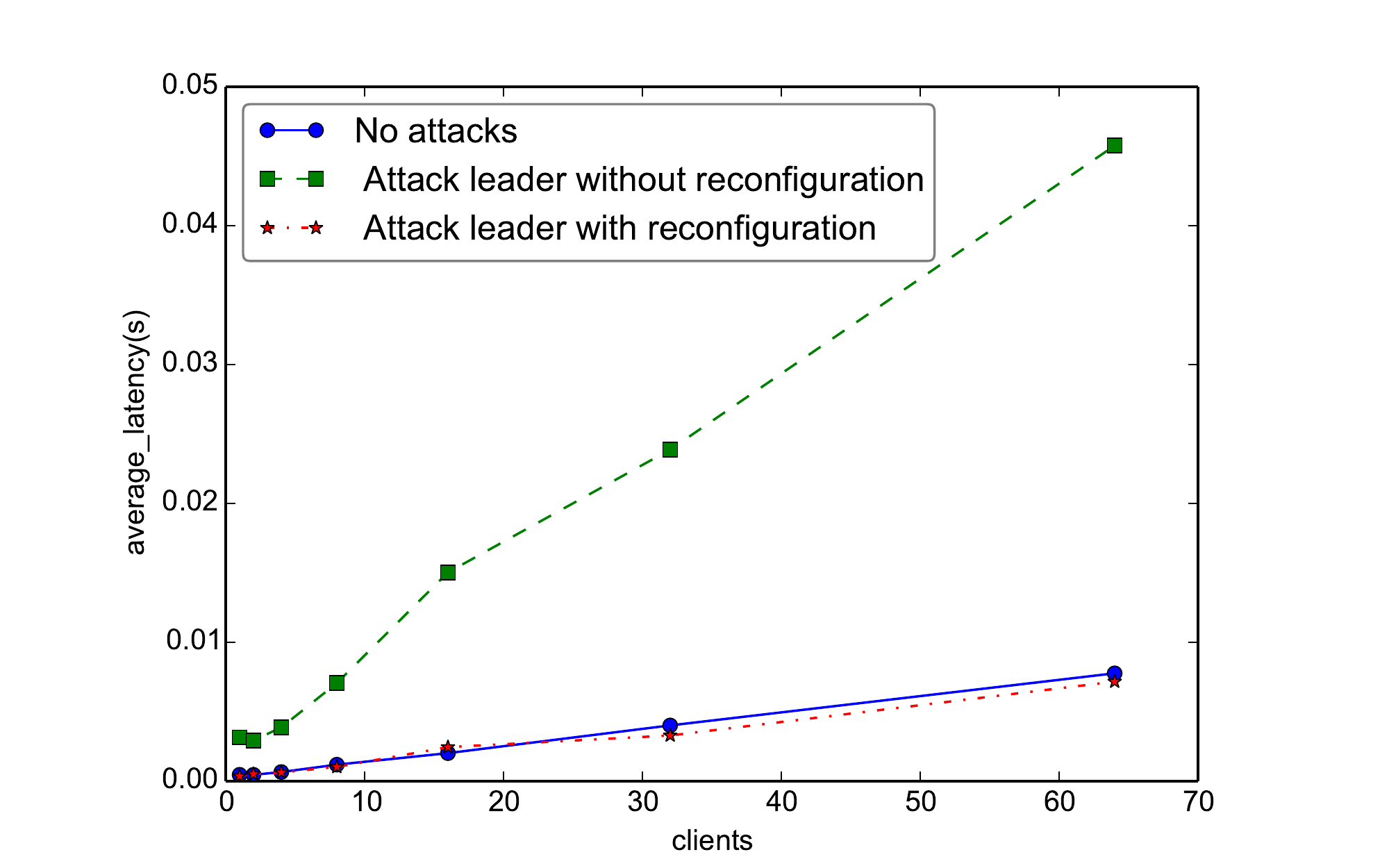}}
    \caption{Moving Participants Turtle Consensus performance under different attack scenarios.}\label{fig:mptc_performance}
    \vspace{-0.5cm}
\end{figure}

Figure~\ref{fig:mptc_throughput} shows the throughput comparison of the 
previous three experiment scenarios as a function of the load on the SMRP. Each 
point represents the average throughput over 10 runs for each number of 
clients. In each of these runs clients connect to random participants, which in 
turn means that performance will vary across experiments. The first scenario is 
our best case scenario since the system operates at full resource capacity. The 
second scenario shows that performance suffers substantially when the Paxos 
leader is under attack. This is to be expected since the leader's participation 
is critical for making progress in each MPTC instance. In the third scenario we 
observe the benefits of the reconfigurable version of MPTC in action. The SMRP 
throughput is close to that of the No Attacks case. The main reason for this 
behavior is that since the leader of the first configuration is under attack 
and lacks the bandwidth to handle the valid traffic, some instance will 
inevitably fail the first round since the remaining participants will 
eventually time out. That will cause a reconfiguration that changes the active 
participant set. The new participants will pick up the failed instances as well 
as future requests and continue operating at full capacity. The minor 
deviations observed between scenarios 1 and 3 are mainly due to the randomness 
of client distribution over the set of all participants.

%\begin{figure}
%  \centering
%    \includegraphics[width=1\textwidth]{figures/latency}
%  \caption{Latency as a function of load and various attack scenarios.}\label{fig:mptc_latency}
%\end{figure}

Figure~\ref{fig:mptc_latency} shows a comparison of the same scenarios as the load increases, but this time with respect to latency. Observe that all scenarios behave similarly with the latency linearly increasing with the load. This behavior is to be expected since, as the load increases, the number of concurrent MPTC instances increases, which in turn increases the latency for each client. After all, each of them has to wait for a response to their previous request before sending the next one. As in the case of throughput, we see that both scenarios 1 and 3 have similar latencies while scenario 2 performs poorly. The reasoning is the same. In the second scenario the leader under attack is slower in completing instances, which raises the wait time for each client.

Note that this evaluation does not take into account the
additional cost of reconfiguration that stems from the cryptographic
operations required for threshold coin tossing like RSA exponentiations.
We therefore expect that under frequent reconfigurations there will
be a wider gap between the performances of scenarios~1 and~3.
However, we also expect that such reconfigurations will be infrequent,
especially as the number of processes increases. Thus, while not
an absolute comparison, our evaluation showcases the expected
behavior and advantage of MPTC.

\section{Related Work}\label{sec:mptc_related_work}

A wide range of crash-tolerant consensus protocols have been proposed in 
literature each optimized for a different setting and/or metric. Some were 
designed to handle datacenter-scale systems like~\cite{CGR07} which describes 
how Paxos was used to implement a fault-tolerant database for the Chubby 
locking service, an instance of which lies in each Google's datacenter. Others 
are focused on wide area deployments such as Mencius~\cite{MJM08}, which is a 
Paxos variant that employs multiple leaders each of which is responsible for a 
different set of consensus instances and may reside at different datacenters. 
Another important differentiating aspect of consensus protocols is whether they employ a special leader process like in~\cite{CT96,Lamport98} or whether they are fully decentralized like the protocol proposed in~\cite{BenOr83}.
This can greatly affect the behavior of a consensus protocol under different failure scenarios, including attacks, and was thus used by previous work on reconfigurable consensus~\cite{NvR15} to design consensus protocols that provide acceptable performance under certain DoS attacks.

%MPTC is an extension of the Turtle Consensus protocol~\cite{NvR15}. Turtle Consensus exploits these diverse characteristics of different consensus protocols by switching the protocol under execution during runtime. Our extension adds another degree of freedom in the Turtle Consensus reconfiguration capabilities by additionally altering the set of processes executing the consensus protocol. In addition, our configuration selection across rounds while still predetermined as in~\cite{NvR15} is now unpredictable thanks to the use of cryptographic techniques.

Our work resembles the work on Vertical Paxos~\cite{LMZ09}. Vertical Paxos is a reconfigurable state machine replication protocol that uses a special auxiliary master process to decide the next configuration of the system including the set of replicas participating in that configuration. Unlike Vertical Paxos, MPTC does not require additional master processes to be constantly active in order to compute the next configuration. Our assumed trusted dealer is only active during initialization. In addition, Vertical Paxos is not designed for an adversary capable of compromising even a single process and thus would not perform as well against the DoS attacks described in this work.

%\subsection{Denial-of-service attacks and moving target defense}
%\label{ssec:mtd}
%
%Denial-of-service (DoS) attacks are a common threat and have been studied in a 
%variety of distributed systems settings, from internet web 
%services~\cite{LRST00} to sensor networks~\cite{WS02} and smart power 
%grids~\cite{ML11}. The attacks considered in this work target the fault-
%tolerance component of a distributed system by attacking one of its core 
%building blocks, its consensus mechanism. Since such fault-tolerance mechanisms 
%are common in many distributed system we believe our method to be applicable 
%in most of the previous settings.

Moving target defenses have often been used as response to DoS and Distributed 
DoS (DDoS) attacks. \cite{GW00} proposes changing the IP address of the target 
node for dealing with local IP-based DoS attacks. More recently 
in~\cite{JAD12}, Software-Defined Networking (SDN) has been used to implement 
moving target defense approaches like ``random host mutation'' in which, 
similarly to~\cite{GW00}, the controller periodically alters the virtual IP 
addresses of hosts to hide the real IP addresses from an intruder. Our Moving 
Participants Turtle Consensus approach resembles more the ``proactive server 
roaming'' approach in~\cite{KSMMZ03}. That is an adaptive approach in which the 
active server proactively switches servers from an existing pool in order to 
deal with unpredictable and undetectable attacks. Their approach ensures that 
only legitimate clients can track the moving server. Like in the case of our 
MPTC protocol, proactive server roaming performs gracefully during attacks. 
However, it imposes significant overhead in attack-free scenarios, which is not 
the case for MPTC since we only reactively change configurations. %For a more in-depth description of the challenges, achievements, and future directions on the topic of moving target defense approaches see~\cite{JGSWW11}.

Our work assumes an adversary that cannot change the set of corrupted 
processes over time. Other related work has focused on dynamic models of 
corruption. \cite{HJKY95} introduced proactive secret sharing, an instance of 
proactive security~\cite{OY91} for supporting secure computation 
in synchronous distributed systems. These ideas have been adapted to 
asynchronous ones in~\cite{CKLS02,ZSV05}. While these approaches did not 
consider DoS attacks, they are orthogonal to ours and can be used to further 
improve this work for dealing with mobile adversaries.

Running consensus on a subset of a larger set of processes to decrease message 
complexity has been explored in~\cite{AAKS14}. It has also been explored more 
recently in~\cite{LNZBGS16} for improving the scalability of Byzantine 
agreement on blockchains.

% Conclusions
\section{Conclusions}
\label{sec:mptc_conclusions}

In this paper we presented Moving Participants Turtle Consensus (MPTC), an 
extension to the Turtle Consensus protocol~\cite{NvR15} that allows running 
different consensus protocols, on different sets of processes, across different 
rounds of a single consensus instance. 
%By altering the execution configuration 
%of the consensus mechanism on the fly we believe we can better protect a system 
%under DoS attacks issued by adversaries that target configuration-specific 
%vulnerabilities. 
% We described MPTC for both crash and byzantine failure environments.
MPTC can deal with adversaries with bounded information on the 
system by making unpredictable changes in the execution of the protocol. 
%Thus, 
%our protocol eliminates adversary's advantage on predicting protocol and 
%execution-specific information that can be used against it. 
%We built a prototype implementation of MPTC in which we used the same protocol 
%across configurations and kept changing the set of processes executing the 
%different rounds. 
Our evaluation of our prototype implementation of MPTC suggests that we can achieve the performance offered by the most efficient consensus protocols even when the system is under attack. 
%Thus, MPTC improves on the core Turtle Consensus protocol not only because it can handle stronger and better informed adversaries but also due to its ability to maintain good performance without having to resolve to less efficient consensus strategies.

There are various directions for further exploration and improvement of MPTC. 
First, MPTC should be tested under more sophisticated attacks, for example
in which the attacker keeps changing the target of the attack even after 
reconfigurations, until it hits the node whose absence most impacts 
performance. In addition, we would like to design a dynamic configuration 
selection scheme in order to extend MPTC's applicability to different 
environments where unexpected changes in the workload of the system may lead to 
sub-optimal configurations.

\bibliographystyle{plainurl}
\bibliography{mybib}  % sigproc.bib is the name of the Bibliography in this case

\newpage

\appendix

% MPTC Correctness
\section{Correctness of Moving Participants Turtle Consensus}
\label{ssec:mptc_proofs}

MPTC needs to satisfy the correctness properties of consensus protocols we 
defined in Section~\ref{sec:mptc_model}. We will show these properties assuming 
that all the underlying protocols used in $\mcp$ are $f$-resilient.

Recall that each protocol $P \in \mcp$ potentially makes additional assumptions 
regarding the system model under which the protocol satisfies the previous 
properties. To guarantee correct execution of $P$ under MPTC, the protocol must 
be correct under the previous set of assumptions and those assumptions made in 
Section~\ref{sec:mptc_model}. Note that this union may contain different 
assumptions about the same property or aspect of the system. Let $P$ be a 
consensus protocol for synchronous systems. Then this union of assumptions will 
contain our assumption that the system is asynchronous and $P$'s assumption 
that it is synchronous. In such cases, where one assumption is stronger than 
another with respect to a particular aspect of the model, we assume that the 
stronger assumption holds. In our example that would mean that the system would 
be synchronous.

To facilitate our discussion we will introduce some notation to describe 
collections of assumptions for the protocols in $\mcp$. Let $\mca$ denote the 
set of assumptions that we made in Section~\ref{sec:mptc_model}. For each $P 
\in \mcp$, let $\mca_P$ denote the set of assumptions $P$ makes about the 
system and its operation in order to tolerate $f$ crash failures while 
satisfying the correctness criteria stated in Section~\ref{sec:mptc_model}. 
Such assumptions may regard the minimum number of processes required as a 
function of $f$, the behavior of the network, the time and ordering of events, 
restrictions on the adversary, and many more. In this work we will not attempt 
to accurately model such assumptions since their descriptions vary greatly 
among different consensus protocols proposed in literature. 
%We will instead treat them as abstractions upon which we will consider different relationships.

Given $\mca$ and $\mcp$ we define the set of assumptions that must hold in every execution of MPTC as

\begin{equation}
\mca_\mcp = (\bigcup_{P \in \mcp} \mca_P) \bigcup \mca \nonumber
\end{equation}

\noindent In other words, every execution of MPTC must respect the union of all assumptions made by every individual protocol used in Phase 1 as well as our model's assumptions. If an assumption is overridden by another, the stronger of the two holds and thus weaker assumptions are excluded from $\mca_\mcp$. Under this definition of $\mca_\mcp$ we have that $\forall P \in \mcp$ every execution of $P$ under $\mca_\mcp$, respects the correctness properties of consensus protocols.

We will first discuss validity and agreement and finally we will argue about 
termination.

\subsection{Validity}
In order to satisfy validity, MPTC needs to ensure that any value decided must 
be the input value $x_p$ of some process $p \in \mcn$. This is encapsulated in 
the following lemma:

\begin{lemma}
\label{correctness:validity}
If a process in $\mcn$ running MPTC under $\mca_\mcp$, decides a value $v \in 
\mcv$, then $\exists p \in \mcn : x_p = v$.
\end{lemma}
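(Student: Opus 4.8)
The plan is to prove a stronger statement by induction on the round number, from which validity follows at once. I would establish the invariant that at every point of the execution, for every process $p \in \mcn$ the current value of $\textit{proposal}_p$ lies in the set of original inputs $V_0 = \{x_q : q \in \mcn\}$, and moreover that every value $v$ appearing in any outcome $o_p^r = (*, v)$ likewise lies in $V_0$. A decision is only ever committed on a value carried by a $(D,v)$ outcome, either directly in Phase 1 or via Case 1 of Phase 3. Hence proving the invariant suffices: the decided value is in $V_0$, which is exactly the claim of Lemma~\ref{correctness:validity}.

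For the base case I would observe that initialization sets $\textit{proposal}_p = x_p \in V_0$ for every $p$, and that no outcome has yet been produced. For the inductive step, assuming the invariant holds at the start of round $r$, I would trace every point in round $r$ at which a proposal or an outcome value can be assigned. In Phase 1 the active participants in $S_r$ run $P_r$ using their current proposals as inputs; since $P_r$ is correct under $\mca_\mcp$ it satisfies validity, so every value $v$ carried by an outcome is the input proposal of some participant in $S_r$, which by the induction hypothesis lies in $V_0$. The proposal updates triggered by a $(D,v)$ or $(M,v)$ outcome therefore keep proposals in $V_0$. In Phase 2 Step 2 an undecided process replaces its proposal with a value selected from the outcomes received in Phase 2 messages, and these are precisely the round-$r$ outcome values just shown to lie in $V_0$. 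Finally, in Phase 3 the processes in $S_{r+1}$ set their proposals according to Cases 1--3, in each case copying a value from a received outcome, all of which are round-$r$ outcomes in $V_0$. Thus the invariant is preserved through the end of round $r$.

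I expect the only delicate point to be the precise invocation of the underlying protocol's validity property within the round-outcome framework of Section~\ref{sec:mptc_model}. What I really need is that the value component $v$ of \emph{every} outcome $(D,v)$, $(M,v)$, or $(U,v)$ produced by $P_r$ — not merely of a committed decision — is one of the inputs handed to $P_r$ in that round, since Phase 2 Step 2 and all three cases of Phase 3 propagate $M$- and $U$-values as well. This is the natural reading of validity for the round framework, and it should follow from the validity guarantee of each $P \in \mcp$ together with the observation that the inputs to round $r$ are exactly the participants' current proposals. Once this is pinned down, the remainder is a routine case analysis over the finitely many assignment points identified above, and validity of MPTC falls out as the specialization of the invariant to committed $(D,v)$ outcomes.
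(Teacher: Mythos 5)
Your proof is correct and takes essentially the same approach as the paper's own proof sketch: decisions arise either in Phase 1, where validity of the underlying protocol in $\mcp$ applies, or in Phase 3, and Phases 2--3 only ever copy values from Phase-1 outcomes. Your explicit induction on rounds, with the invariant that all proposals and outcome values lie in $V_0$, is simply the rigorous form of that sketch, and the ``delicate point'' you flag --- that the values carried by $(M,v)$ and $(U,v)$ outcomes, not only committed decisions, must be inputs to the round --- is exactly the reading of validity that the paper's sketch also assumes implicitly.
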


\begin{proof}
\emph{(Sketch)} A process running MPTC can decide during Phase 1 or Phase 3. 
Any decision made during Phase 1 respects validity by correctness of each 
protocol in $\mcp$ under $\mca_\mcp$. By the outcome and proposal update 
function of Phases 2 and 3 we described in Section~\ref{ssec:prt_dsc}, the 
processes' proposals and thus possible decision values can only come from the 
values of the outcomes produced in Phase 1. Since validity holds during Phase 
1, if some value $v$ gets decided at some round $r$, $\exists p \in \mcn$ whose 
input value in $r$ is $v$.
\end{proof}

From the previous lemma, we have that MPTC initialized with a valid $\mcp$ 
satisfies validity under $\mca_\mcp$.

\subsection{Agreement}
Agreement is a safety property that ensures that no bad states occur during the system's operation. All correct consensus protocols must satisfy the agreement property for the entirety of their execution. Thus we need to show:

\begin{lemma}
\label{correctness:agreement}
If any two processes running MPTC under $\mca_\mcp$ decide values $v$ and 
$v^\prime$ respectively, then $v = v^\prime$.
\end{lemma}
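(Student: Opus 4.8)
The plan is to argue that once any process decides a value $v$ at some round $r$, every configuration of every subsequent round forces the decision value to remain $v$, so no process can ever decide a different value. The argument rests on the two round-outcome invariants (Invariant~\ref{inv:decision} and Invariant~\ref{inv:undecided}) together with the proposal-update rules of Phases~2 and~3. First I would establish a single-round \emph{locking} claim: if some process $p \in S_r$ reaches outcome $o_p^r = (D,v)$ in Phase~1 of round $r$, then by Invariant~\ref{inv:decision} every other correct active participant $q$ ends Phase~1 with $o_q^r \in \{(D,v),(M,v)\}$. I would then trace what each such $q$ does to its proposal: in Phase~2 Step~2 a process only overwrites its proposal when $o_q = (U,*)$, which cannot happen here by Invariant~\ref{inv:undecided} since a $(D,v)$ outcome exists; and in Phase~3, whether $q$ is an active participant carrying its own outcome forward or a member of $S_{r+1}$ applying Cases~1--3 to the received set $R$, every applicable case sets $\textit{proposal}=v$ (Case~1 on seeing $(D,v)$, Case~2 since all received outcomes are $(M,v)$ or $(D,v)$). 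Hence the value $v$ is \emph{locked}: every process entering round $r+1$ has proposal $v$.

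The next step is to promote this single-round statement to an invariant over all rounds by induction on the round number. The inductive hypothesis is: \emph{if a decision $v$ has occurred at or before round $r$, then every process that is active in round $r+1$ holds proposal $v$ and can only produce outcomes of the form $(D,v)$ or $(M,v)$.} The base case is exactly the locking claim above. For the inductive step I would invoke the correctness of the underlying protocol $P_{r+1}$ under $\mca_\mcp$: a round of a correct consensus protocol run by processes that all propose the same value $v$ must, by validity, produce only $v$ as a decided or maybe-value, so all outcomes are again in $\{(D,v),(M,v)\}$, and the Phase~2/3 update rules again propagate proposal $v$ to round $r+2$. This closes the induction and shows that from the first decision onward the locked value never changes, regardless of which configuration $C_{r'}$ (protocol and participant set) is selected for later rounds.

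Finally I would conclude: suppose two processes decide $v$ and $v'$, at rounds $r$ and $r'$ with (w.l.o.g.) $r \le r'$. The first decision of $v$ locks the value $v$ for all subsequent rounds by the invariant, so any later decision—including the one at round $r'$—must also be $v$, giving $v = v'$. The main obstacle I anticipate is handling the $S_r \neq S_{r+1}$ \emph{moving-participants} aspect cleanly: the decided value must survive the hand-off between disjoint participant sets. This is exactly what Phase~3 is designed for—active participants in $S_r$ broadcast their outcomes to the \emph{next} set $S_{r+1}$, and a process in $S_{r+1}$ updates its proposal only after collecting $\mcp_f - f$ matching Phase~3 messages carrying the same $C_{r+1}$. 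The care point is to verify that this quorum is guaranteed to contain at least one message from a process that witnessed the $(D,v)$ or $(M,v)$ outcome, so that Case~1 or Case~2 (never Case~3 with a stale value) governs the update; I would argue this from the $f$-resilience bound $\mcp_f \ge 2f+1$, which ensures that any two quorums of size $\mcp_f - f$ drawn from $S_r$'s outcomes intersect in a correct process, transferring the locked value across the configuration boundary.
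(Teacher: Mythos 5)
Your proof is correct and follows essentially the same route as the paper's: the same-round case is dispatched by agreement of the underlying protocol in $\mcp$, and the cross-round case by the value-propagation rules of Phases~2 and~3 (via Invariants~\ref{inv:decision} and~\ref{inv:undecided}) together with validity of the protocol run in the following round. Your forward induction (``locking'') is just a repackaging of the paper's contradiction argument, which fixes a transition round $\bar{r}$ at which the locked value would have to change and shows this is impossible.

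One caveat on your final paragraph: the quorum-intersection argument is both unnecessary and unsound as stated. After a decision at round $r$, Invariant~\ref{inv:undecided} excludes $(U,*)$ outcomes, and Invariant~\ref{inv:decision} (together with the fact that, under crash failures, every process follows the protocol faithfully until it stops) forces \emph{every} Phase~3 message sent by a member of $S_r$ to carry $(D,v)$ or $(M,v)$; hence any collection of $\mcp_f - f$ Phase~3 messages consists exclusively of $v$-outcomes, and Case~1 or Case~2 applies directly --- no intersection property is needed. That is fortunate, because the property you invoke does not hold in general: two quorums of size $\mcp_f - f$ inside a participant set of size $\mcp_f$ are only guaranteed to share $\mcp_f - 2f$ processes, so with $\mcp_f = 2f+1$ they share a single process, which need not be correct; guaranteeing a \emph{correct} process in the intersection would require $\mcp_f \geq 3f+1$.
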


\begin{proof}
\emph{(Sketch)} Assume that $\exists p, p^\prime \in \mcn$ that decide values 
$v$, $v^\prime \in \mcv$ in rounds $r, r^\prime \in \mathbb{N}$ respectively 
such that $v \neq v^\prime$. We have the following two cases:
\begin{itemize}
    \item \emph{Case} $r = r^\prime$: For two different processes to decide 
    different values it must be the case that outcomes $(D, v)$, $(D, 
    v^\prime)$ are computed after Phase 1 of $r$. This cannot occur because 
    each $P \in \mcp$ is correct under $\mca_\mcp$ and therefore respects 
    agreement. 
%Every correct consensus protocol must respect Invariant~\ref{inv:decision}.
    \item \emph{Case} $r \neq r^\prime$: W.l.o.g. assume $r < r^\prime$. There 
    must be a round $\bar{r}$ such that $r \leq \bar{r} < r^\prime$ in which 
    all correct processes in $S_{\bar{r}}$ compute either $(D, v)$ or $(M, v)$ 
    outcomes during Phase 1 and after which $\exists q \in S_{\bar{r} + 1}$ 
    such that either $o_q = (D, v^\prime)$ or $o_q = (M, v^\prime)$ and 
    $v^\prime \neq v$. This is impossible: First, Phases 2 and 3 of round 
    $\bar{r}$ only allow processes to change their values to those of the 
    outcomes computed during Phase 1 of $\bar{r}$, which in our scenario will 
    be $v$. In other words, all values received by processes in $S_{\bar{r} + 
    1}$ in Phase 3 of round $\bar{r}$ will be $v$. Second, by the validity 
    property of the protocols in $\mcp$ no outcome $(D, v^\prime)$ or $(M, 
    v^\prime)$ can be computed after Phase 1 of round $\bar{r} + 1$ if all 
    processes start with proposal $v$.
\end{itemize}
By contradiction it must hold $v = v^\prime$.
\end{proof}

\subsection{Termination}
Termination encapsulates the liveness or progress requirement on consensus 
protocols by ensuring that eventually a decision is or can be made. In our 
model (Section~\ref{sec:mptc_model}), we stated two versions of termination, 
the definite and the probabilistic one. Notice that by definition, definite 
termination implies probabilistic termination.

Recall that in MPTC each consensus round is run by a subset of the system's 
processes. It is possible that in an infinite execution some correct processes 
may only execute a finite number of consensus protocols rounds and as a result 
not be able to decide. Note that even if at least one correct process decides, 
all correct processes can eventually learn this decision by having the decided 
processes broadcast a special decision message to all processes in $\mcn$, 
which in turn decide upon reception of that message. Therefore to guarantee 
definite (probabilistic) termination we simply need to show that eventually at 
least one correct process in $\mcn$ decides (with probability 1).

MPTC termination depends on both the guarantees provided by protocols in $\mcp$ 
and the properties of the interleavings of rounds of different protocols during 
MPTC execution. The guarantees of each $P \in \mcp$ allow us to reason about 
the properties that each MPTC round satisfies. We know that every round of $P$ 
must produce an outcome in $\{D, M, U\} \times \mcv$ provided that $\mca_P$ 
hold. If not, $P$ would violate termination. Note that $P$ should also satisfy 
termination under $\mca_\mcp$ as well, since $\mca_\mcp$ makes the same or 
stronger assumptions. If $P$ guarantees definite termination, then in any 
infinite execution of $P$ there must be at least one correct process $p \in 
\mcn$ that produces outcome $(D, v)$, $v \in \mcv$ in infinitely many rounds. 
If $P$ guarantees probabilistic termination then $\exists \epsilon \in (0, 1]$ 
such that in any infinite execution of $P$, infinitely many rounds have 
probability at least $\epsilon$ for at least one correct process to produce 
outcome $(D, v)$, $v \in \mcv$.

Given the previous guarantees provided by the protocols in $\mcp$ we can show 
the following:

\begin{lemma}
\label{lem:non_block}
Every correct process executing a round of MPTC under $\mca_\mcp$ eventually completes that round.
\end{lemma}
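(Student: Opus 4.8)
The plan is to walk through the three phases of a single round $r$ in order and check that a correct, non-attacked active participant $p \in S_r$ never blocks forever, invoking the fault bound $f$ and the reliable-communication assumption at the steps where the protocol waits on other processes. The only genuine waiting points are the execution of the underlying protocol round in Phase~1 and the message-collection step of Phase~2; every other action is purely local.

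First I would handle Phase~1. A correct $p \in S_r$ executes one round of the underlying protocol $P_r$ specified by $C_r$. By the discussion preceding this lemma, every round of a correct $P \in \mcp$ produces an outcome in $\{D,M,U\}\times\mcv$ whenever $\mca_P$ holds; since $\mca_\mcp$ is at least as strong as $\mca_{P_r}$, and since $f$ bounds the crashed-or-attacked processes across all of $\mcn$, at most $f$ of the $\mcp_f$ members of $S_r$ are faulty, so $P_r$ runs on its required quorum of at least $\mcp_f - f$ live participants and its round terminates with an outcome $o_p$. Hence $p$ leaves Phase~1.

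The crux is Phase~2, Step~1, where $p$ must collect Phase~2 messages from $\mcp_f - f$ members of $S_r$. Here I would argue this condition is always eventually satisfiable. Because $|S_r| = \mcp_f$ and at most $f$ members of $S_r$ are faulty, at least $\mcp_f - f$ members are correct and able to communicate; by the Phase~1 argument each of them also reaches Phase~2, Step~1 and broadcasts its Phase~2 message; and by the reliable-and-secure communication assumption every such message sent to the non-attacked $p$ is eventually delivered. Thus $p$ eventually receives $\mcp_f - f$ messages from some $Q \subseteq S_r$ and proceeds. I would also note that, since $\mcp_f - f \geq f+1$ for every $f$-resilient protocol, $Q$ supplies at least $f+1$ function shares, so the \textit{combine} call $C_{r+1} = \textit{combine}(F_{S_r}^Q(r), r)$ in Step~3 is well-defined.

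Finally, Steps~2 and~3 of Phase~2 (updating $\textit{proposal}_p$ and $o_p$, computing $C_{r+1}$) and Phase~3 (sending Phase~3 messages to $S_{r+1}$ and advancing $r_p$, $c_p$, $o_p$) are local and require no further waiting, so they complete immediately; chaining these observations yields that $p$ completes round $r$. The main obstacle is precisely the Phase~2 waiting step: the whole argument hinges on the counting fact that the global bound $f$ always leaves at least $\mcp_f - f$ live participants inside $S_r$, so the quorum $p$ awaits can never fail to materialize. I would make this explicit, and also record that a participant which is itself under DoS attack is, in the asynchronous model, indistinguishable from a crashed process and is counted among the $f$ faulty processes — so the guarantee is for those correct participants whose per-step conditions are, per our model, eventually met.
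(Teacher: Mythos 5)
Your overall route is the same as the paper's: Phase~1 terminates by correctness of the underlying protocol under $\mca_\mcp$, and the waiting step completes because the bound $f$ on crashed-or-attacked processes leaves at least $\mcp_f - f$ correct members of $S_r$ whose messages are eventually delivered under the reliable-communication assumption. Your added observations (that $\mcp_f - f \geq f+1$ so the \textit{combine} call is well-defined, and that an attacked process is indistinguishable from a crashed one and is counted within $f$) are correct and harmless.

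There is, however, one genuine gap: you assert that Phase~3 is ``local and requires no further waiting.'' That is true only for the sending side, i.e., for $p \in S_r$ viewed purely as a sender. But Phase~3 of round $r$, as specified, also contains a blocking receive step: each process $q \in S_{r+1}$ must collect Phase~3 messages carrying the \emph{same} configuration value $C_{r+1}$ from $\mcp_f - f$ processes of $S_r$ before it may update its proposal, set $r_q = r+1$, and start the next round. This wait is part of the round, and it is precisely what allows the next round's participants to begin their Phase~1 --- so if it could block forever, the lemma (and the termination argument built on top of it) would collapse. The paper's proof covers this explicitly, stating that \emph{both} Phases~2 and~3 rely on each correct process eventually receiving $\mcp_f - f$ messages, justified by the same counting-plus-reliability argument you used for Phase~2. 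To close the gap you would repeat that argument for the Phase~3 receivers, together with one small extra observation: all correct senders attach the \emph{same} $C_{r+1}$, because \textit{combine} applied to any set of $f+1$ valid function shares of $F_{S_r}(r)$ yields the same value $F_{S_r}(r)$ (robustness of the threshold coin-tossing scheme); hence the ``same configuration value from $\mcp_f - f$ processes'' condition is eventually satisfiable.
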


\begin{proof}
To show the above we need to ensure that each active participant of some round 
$r$ completes the all three Phases. Phase 1 completes by the termination 
property of each protocol in $\mcp$ with each correct process in $S_r$ 
computing an outcome with respect to the protocol specified by configuration 
$C_r$. Phases 2 and 3 rely on each correct process eventually receiving 
$\mcp_f - f$ messages which will occur due to our assumptions on network 
reliability and maximum number of failures and processes under attack.
\end{proof}

Lemma~\ref{lem:non_block} ensures no process participating in any of the MPTC 
Phases ever blocks. This \emph{non-blocking} property of MPTC rounds, however, 
is not sufficient to ensure progress. To reason about progress, we need to 
reason about the effect of interleaving rounds of different protocols on 
infinite executions. 

To reason about interleavings of consensus protocol rounds we need to reason 
about configuration sequences and thus about the properties of the 
unpredictable functions $F_S$ selected by dealer $T$ for each participant set 
$S$ during initialization. Consider the case where $\mcp$ contains two 
artificial protocols $P$, $P^\prime$ such that any process running $P$ can only 
decide during an odd round, while any process running $P^\prime$ can only 
decide on an even round. Assume a pathological infinite execution in which $C_r 
= (P, S_r)$ if $r$ is even and $C_r = (P^\prime, S_r)$ if $r$ is odd. If we can 
define $F_{S_r}$ for each round $r$ in such a way that the previous 
configuration sequence is generated during Phase 2 of MPTC, termination cannot 
be achieved.

%More formally, we call a protocol round \emph{possibly deciding} if there exists a run of that round under $\mca_\mcp$ in which at least one correct process can compute an outcome $(D, v)$ for some $v \in \mcv$. We call any infinite sequence of configurations from $\mcc$ an \emph{interleaving} in $\mcc$ and we define a conforming interleaving as one that contains infinitely many possible deciding rounds.

More formally, let the set of possible configurations, $\mcc$, be based on a 
valid $\mcp$. We define a finite sequence $\mci$ of configurations in $\mcc$ as 
\emph{conforming} if executing MPTC under $\mca_\mcp$ such that $\mci$ appears 
infinitely often, at least one correct process computes a decision outcome in 
infinitely many rounds. Any finite sequence of configurations in $\mcc$ that 
does not have the previous property is called \emph{non-conforming}. The case 
described in the previous paragraph is an example of such a non-conforming 
sequence. From now on we will refer to finite sequences of configurations in 
$\mcc$ as interleavings. 

%With this framework in mind it is trivial to see that:
%
%\begin{corollary}
%\label{cor:termination_impossibility}
%Given configurations set $\mcc$ based on some valid $\mcp$, if every interleaving in $\mcc$ is non-conforming, termination is impossible.
%\end{corollary}
%
%The previous statement creates another restriction on any implementation of MPTC and more specifically on the choice of $\mcp$ and definition of $\mcc$. In every implementation there must be at least one conforming interleaving in $\mcc$.

The previous notion of conforming interleavings raises another restriction on 
any implementation of MPTC and more specifically on the choice of $\mcp$. In 
every implementation there must be at least one conforming interleaving 
containing configurations of $\mcc$. We call an infinite execution of MPTC in 
which the corresponding configuration sequence contains infinitely many 
conforming interleavings, a \emph{conforming execution}.

In the random oracle model, which we assume in this work, the configurations 
sequences generated by the shared functions $F_S$ have the following property:

\begin{lemma}
\label{lem:cinterleavings}
Let $\mcc$ be a set of configurations based on a valid set of protocols $\mcp$ 
and let $F_S$ be an unpredictable function for each $S \in PS$ generated by 
$T$. Assuming a conforming interleaving $\mci$ exists in $\mcc$, any infinite 
sequence of configurations corresponding to an infinite execution of MPTC 
contains infinitely many occurrences of $\mci$.
\end{lemma}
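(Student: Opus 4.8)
The plan is to show that, in the random oracle model, the sequence of configurations $C_1,C_2,\dots$ produced by an infinite execution behaves like an i.i.d.\ sequence of draws from $\mcc$ in which every configuration has strictly positive probability, and then to invoke the second Borel--Cantelli lemma to conclude that the fixed finite block $\mci$ recurs infinitely often. Since the whole development lives in the random oracle model and termination is already probabilistic, ``contains infinitely many occurrences'' is to be read as holding with probability $1$.

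First I would pin down how the configuration sequence is generated. By the protocol, $C_0$ is fixed by $T$ and, for every $r$, the active participants of round $r$ compute $C_{r+1}=\textit{combine}(F_{S_r}^Q(r),r)=F_{S_r}(r)$, where $S_r$ is the participant set of $C_r$. In the crash model shares are honestly generated, so by correctness of \textit{combine} (and robustness) this always returns the true value $F_{S_r}(r)$; hence the sequence $C_1,C_2,\dots$ is a deterministic function of the dealer's random functions, is unaffected by the adversary's scheduling, and is infinite by Lemma~\ref{lem:non_block}. Modeling each $F_S$ as a random oracle, I would represent the randomness as an independent family $\{X_{S,r}\}_{S\in PS,\,r\in\mathbb{N}}$, where $X_{S,r}=F_S(r)$ is uniform on $\{0,1\}^b$ and then mapped into $\mcc$ by the fixed onto map of Section~\ref{ssec:crypto_prims}; thus each $X_{S,r}$ has a common law $\mu$ on $\mcc$ with $\mu(c)\ge 2^{-b}>0$ for every $c\in\mcc$.

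The crucial observation is that the query made at round $r$ is $X_{S_r,r}$, whose input index equals the round number. Because round numbers strictly increase, the pairs $(S_0,0),(S_1,1),(S_2,2),\dots$ are pairwise distinct regardless of how the active sets repeat, so each round consumes a fresh member of the independent family. By induction this yields that $C_1,C_2,\dots$ are i.i.d.\ with law $\mu$: conditioned on $C_1,\dots,C_r$ (which determine $S_r$), the value $C_{r+1}=X_{S_r,r}$ uses index $r$ never used in computing $C_1,\dots,C_r$ and has law $\mu$ no matter which $S_r$ was selected, so it is independent of the history. Writing $\mci=(D_1,\dots,D_k)$ and partitioning into disjoint blocks $B_m=(C_{mk+1},\dots,C_{(m+1)k})$, the events $A_m=\{B_m=\mci\}$ are then mutually independent with constant probability $p=\prod_{i=1}^k\mu(D_i)>0$, so $\sum_m\Pr[A_m]=\infty$ and the second Borel--Cantelli lemma gives $\Pr[\limsup_m A_m]=1$; hence infinitely many blocks equal $\mci$, so $\mci$ occurs at infinitely many positions.

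I expect the main obstacle to be the rigorous justification of the i.i.d.\ claim rather than the counting: one must argue that the adaptive, orbit-like way in which the next function $F_{S_r}$ is chosen (it depends on past outputs) does not destroy independence, which is precisely where the fresh-input property and the random oracle assumption are indispensable. Even if the onto maps into $\mcc$ differed across participant sets, a uniform positive lower bound $\delta$ on the conditional probability of each configuration would survive, and a conditional form of the Borel--Cantelli lemma would still force infinitely many matches; the clean i.i.d.\ route above is simply its cleanest instantiation. Finally I would remark that conformance of $\mci$ is not actually used here---any finite word over $\mcc$ recurs infinitely often---so this lemma supplies exactly the recurrence that the termination argument later combines with the defining decision property of conforming interleavings.
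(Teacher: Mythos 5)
Your proof is correct and takes essentially the same route as the paper's: in the random oracle model each $F_{S_r}(r)$ is uniform over its co-domain, so every configuration has positive probability of being selected at every round, and hence any fixed finite interleaving recurs infinitely often (almost surely). The paper states this in a one-line sketch; your additions --- the observation that round numbers never repeat, so each round consumes a fresh oracle query (which is what justifies independence despite $S_r$ depending on past outputs), the disjoint-block decomposition, and the explicit second Borel--Cantelli step --- simply make rigorous what the paper asserts without detail, including the correct reading of the conclusion as a probability-$1$ statement.
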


\begin{proof}
Each $F_S$ used to generate the next configuration in Phase 2 of each MPTC 
round is based on the threshold coin-tossing mechanism described in 
Section~\ref{sec:mptc_model}. The unpredictability of this mechanism is based 
on the use of cryptographic hash functions. In the random oracle model, given 
some input $r$ these functions produce a value chosen uniformly at random from 
their co-domain. In other words, at each round $r$ there is a positive 
probability for each configuration $C \in \mcc$ to be selected as the next 
configuration to run. Therefore in an infinite execution, any interleaving in 
$\mcc$ appears infinitely often. Thus conforming interleaving $\mci$ appears 
infinitely often.
\end{proof}

By the previous Lemma we have that:

\begin{corollary}
\label{cor:cexecutions}
Assuming a conforming interleaving $\mci$ exists in $\mcc$, any infinite 
execution of MPTC is conforming.
\end{corollary}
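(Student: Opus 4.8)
The plan is to obtain this corollary as an almost immediate consequence of Lemma~\ref{lem:cinterleavings}, so the proof will be very short. The key observation is that the definition of a \emph{conforming execution} only requires the corresponding configuration sequence to contain infinitely many occurrences of conforming interleavings, and Lemma~\ref{lem:cinterleavings} already hands us exactly one such interleaving occurring infinitely often.

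Concretely, I would start by fixing an arbitrary infinite execution of MPTC under $\mca_\mcp$, and let $\sigma$ denote its corresponding infinite sequence of configurations in $\mcc$. By hypothesis a conforming interleaving $\mci$ exists in $\mcc$, so the premise of Lemma~\ref{lem:cinterleavings} is met. Applying that lemma to $\sigma$ yields that $\mci$ occurs infinitely often as a (contiguous) subsequence of $\sigma$.

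Next I would connect this back to the definition of a conforming execution. Since each of these infinitely many occurrences is an occurrence of the \emph{conforming} interleaving $\mci$, the sequence $\sigma$ contains infinitely many occurrences of conforming interleavings. This is precisely the condition for the execution to be conforming, so the arbitrary execution we fixed is conforming, which completes the argument.

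The only point that needs care --- and the closest thing to an obstacle --- is the bookkeeping between ``infinitely many occurrences of the single interleaving $\mci$'' and ``infinitely many conforming interleavings'' as used in the definition. I would make explicit that a conforming execution is defined by the appearance of conforming interleavings infinitely often, as opposed to requiring infinitely many \emph{distinct} conforming interleavings, so that the repeated appearance of the one interleaving $\mci$ already suffices. All of the substantive work --- establishing that the random-oracle-based functions $F_S$ cause every interleaving, and in particular $\mci$, to recur infinitely often in any infinite execution --- has already been carried out in Lemma~\ref{lem:cinterleavings}.
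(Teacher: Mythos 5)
Your proof is correct and follows exactly the paper's route: the paper derives this corollary immediately from Lemma~\ref{lem:cinterleavings} (it is stated with ``By the previous Lemma'' and no further argument), just as you do by noting that infinitely many occurrences of the single conforming interleaving $\mci$ already satisfy the definition of a conforming execution. Your explicit remark distinguishing ``infinitely many occurrences'' from ``infinitely many distinct interleavings'' is a reasonable clarification of the definition, but it adds nothing beyond what the paper intends.
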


We can now reason about the termination guarantees of MPTC, under the 
assumptions that $\mcp$ is valid, that there is a conforming interleaving in 
$\mcc$ and that $\mca_\mcp$ hold during MPTC's execution. Depending on the 
protocols executed under the configurations of a conforming interleaving, 
definite or probabilistic termination can be guaranteed. This is shown in the 
following lemmas:

\begin{lemma}
\label{lem:mptc_def_termination}
Let $\mci$ be a conforming interleaving in $\mcc$ that contains at least one 
round of a protocol satisfying definite termination. In any infinite execution 
of MPTC there is at least one correct process that makes a decision.
\end{lemma}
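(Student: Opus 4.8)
The plan is to combine the conforming-interleaving machinery just established (Corollary~\ref{cor:cexecutions}) with the definite-termination guarantee of the distinguished protocol in $\mci$, and then use the non-blocking property (Lemma~\ref{lem:non_block}) to rule out the only other way the execution could stall. Concretely, I would argue by contradiction: suppose that in some infinite execution no correct process ever decides. By Lemma~\ref{lem:non_block} every correct active participant completes each round it begins, so the execution genuinely advances through infinitely many rounds rather than getting stuck inside a single round's Phases. By Corollary~\ref{cor:cexecutions}, since a conforming interleaving $\mci$ exists in $\mcc$, this infinite execution is conforming, meaning the configuration sequence contains infinitely many occurrences of $\mci$.

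Next I would exploit the definite-termination hypothesis. The interleaving $\mci$ is assumed to contain at least one round running a protocol $P$ that satisfies definite termination. Recall from the discussion preceding Lemma~\ref{lem:non_block} that definite termination of $P$ means: in any infinite execution of $P$, at least one correct process produces a decision outcome $(D,v)$ in infinitely many rounds. The key observation is that the infinitely many copies of $\mci$ supply infinitely many rounds in which $P$ is the active protocol on a participant set $S_r$; taken together these rounds constitute (a sub-execution corresponding to) an infinite execution of $P$. Hence some correct process must compute outcome $(D,v)$ in one of these $P$-rounds during Phase~1. Once a correct process computes $(D,v)$ in Phase~1 it decides $v$ (per the Phase~1 rule in Section~\ref{ssec:prt_dsc}) and never revises that decision. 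This contradicts the assumption that no correct process ever decides, completing the argument.

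\textbf{The main obstacle} will be making rigorous the claim that the infinitely many $P$-labeled rounds scattered through the MPTC execution actually \emph{form} a legitimate infinite execution of $P$ to which $P$'s termination guarantee applies. A round of $P$ inside MPTC does not run $P$ to completion in isolation: proposals are carried across rounds by Phases~2 and~3, the active participant set may change between successive $P$-rounds, and intervening rounds run entirely different protocols. I would need to argue that the sequence of $P$-rounds, with the proposals threaded through the intervening reconfiguration phases, still satisfies the hypotheses $\mca_P$ (and hence $\mca_\mcp$) under which $P$'s definite-termination guarantee holds — in particular that $P$ sees enough correct participants taking infinitely many steps. This is precisely what the \emph{conforming} definition was crafted to encapsulate: a conforming interleaving is by definition one whose infinitely-often occurrence forces a decision outcome in infinitely many rounds, so I can lean on that definition directly rather than re-deriving $P$'s internal liveness. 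The remaining care is ensuring that the outcome-and-proposal update rules of Phases~2 and~3 never destroy a decision once reached, which follows from the Phase~1 decision being permanent and from Invariants~\ref{inv:decision} and~\ref{inv:undecided} preventing any conflicting non-decision outcome from overwriting it.
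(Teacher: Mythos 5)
Your proposal is correct and takes essentially the same route as the paper's proof: infinitely many occurrences of $\mci$ (via Lemma~\ref{lem:cinterleavings}, equivalently Corollary~\ref{cor:cexecutions}) supply infinitely many rounds of the definitely-terminating protocol $P$, whose termination guarantee forces some correct process to compute a permanent $(D,v)$ outcome in Phase~1. Your additions---the contradiction framing, the explicit appeal to Lemma~\ref{lem:non_block}, and the observation that the ``scattered rounds of $P$'' issue can be closed by leaning on the definition of conforming directly---are refinements of, not departures from, the paper's argument, which simply asserts the application of $P$'s termination guarantee without addressing that subtlety.
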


\begin{proof}
By Lemma~\ref{lem:cinterleavings} we have that $\mci$ will be executed 
infinitely often and so will the round of some protocol $P$ satisfying definite 
termination. By the termination guarantees of $P$ it must be the case that at 
least one correct process computes a decision outcome $(D, v)$ for some $v \in 
\mcv$ in infinitely many rounds. Therefore eventually at least one correct 
process running MPTC decides.
\end{proof}

We can state a similar lemma for probabilistic termination:

\begin{lemma}
\label{lem:mptc_prob_termination}
Let $\mci$ be a conforming interleaving in $\mcc$ that contains rounds of 
protocols satisfying probabilistic termination. In any infinite execution of 
MPTC there is at least one correct process that makes a decision with 
probability 1.
\end{lemma}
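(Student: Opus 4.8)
The plan is to mirror the structure of Lemma~\ref{lem:mptc_def_termination}, replacing the single guaranteed decision with a probabilistic argument that accumulates over infinitely many trials. First I would invoke Lemma~\ref{lem:cinterleavings} (via Corollary~\ref{cor:cexecutions}) to establish that in any infinite execution, the conforming interleaving $\mci$ appears infinitely often. Since $\mci$ contains rounds of protocols satisfying probabilistic termination, and since such a protocol $P$ guarantees (as stated in the termination discussion) that there is some fixed $\epsilon \in (0,1]$ such that infinitely many rounds of $P$ have probability at least $\epsilon$ for at least one correct process to compute a decision outcome $(D,v)$, I would extract from each occurrence of $\mci$ at least one round where a decision is made with probability at least $\epsilon$.

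The core of the argument is then a Borel--Cantelli-style computation. Let $E_k$ denote the event that a decision outcome is computed during the $k$-th occurrence of $\mci$ in the configuration sequence. Each $E_k$ has probability at least $\epsilon$, bounded away from zero uniformly. The probability that \emph{no} decision is ever made across the first $m$ occurrences is at most $(1-\epsilon)^m$, which tends to $0$ as $m \to \infty$. Since the execution is infinite and $\mci$ occurs infinitely often, $m$ is unbounded, so the probability that no correct process ever decides is $\lim_{m \to \infty}(1-\epsilon)^m = 0$. Hence with probability $1$ at least one correct process eventually decides. Once one correct process decides, the broadcast-of-decision mechanism described in the termination preamble lets all correct processes learn the decision, so probabilistic termination of MPTC follows.

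The main obstacle I anticipate is justifying the independence (or at least the uniform lower bound) needed to multiply the per-occurrence failure probabilities. The coin-tossing mechanism that selects configurations is the same randomness driving both the interleaving pattern and, potentially, the internal coin flips of the probabilistic protocol $P$, so the events $E_k$ need not be independent in the naive sense. The cleanest way to handle this is to condition on the history up to the start of the $k$-th occurrence of $\mci$ and argue that, conditioned on any such history, the probability of a decision during that occurrence is still at least $\epsilon$ by the intrinsic termination guarantee of $P$. This gives $\Pr[\text{no decision in occurrences } 1,\ldots,m] \le (1-\epsilon)^m$ without requiring full independence, only the conditional lower bound, which is exactly what probabilistic termination of $P$ provides. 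I would therefore phrase the step in terms of conditional probabilities to sidestep the independence issue entirely. The remainder is routine and parallels the definite-termination proof almost verbatim.
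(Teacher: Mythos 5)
Your proposal is correct and takes essentially the same approach as the paper: the paper's own proof is a one-line remark that the argument parallels Lemma~\ref{lem:mptc_def_termination}, with the decision rounds now only guaranteeing a decision with probability at least $\epsilon$ rather than outright. Your explicit $(1-\epsilon)^m \to 0$ computation and the conditioning-on-history argument to sidestep independence make rigorous exactly what the paper leaves implicit (the paper never addresses the dependence between the configuration-selecting coin tosses and the protocols' internal randomness at all), so your write-up is a strictly more careful rendering of the same proof.
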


\begin{proof}
The proof is similar to that of Lemma~\ref{lem:mptc_def_termination} except 
from the fact that the rounds in which a decision can be made can only ensure 
that at least one process decides with probability 1.
\end{proof}

Finally, we need to show the security properties satisfied by MPTC under the 
previous assumptions. More specifically we need to show the following two 
properties:

\begin{lemma}
\label{lem:mptc_robustness}
\emph{Robustness}: It is computationally infeasible for $A$ to produce $r$ and 
$f+1$ shares of $r$ for any participant set $S$ such that the output of 
\textit{combine} given the previous shares and value as input is $F_S(r)$.
\end{lemma}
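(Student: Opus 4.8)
The plan is to prove robustness by a direct reduction to the robustness of the underlying $(\mcp_f, f+1, f)$ threshold coin-tossing scheme of Section~\ref{ssec:crypto_prims}, which we take as one of our cryptographic hypotheses (following \cite{Cachin05}). Concretely, I would establish the meaningful form of the property as defined there: that it is computationally infeasible for $A$ to produce a value $r$ and $f+1$ valid shares of $r$ for some $S \in PS$ whose combination is \emph{not} $F_S(r)$, i.e.\ that $A$ cannot force \textit{combine} to return an incorrect configuration.

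First I would make the composition explicit. Each $F_S$ is not shared directly; instead $T$ shares via the coin-tossing scheme an unpredictable function $G_S : \{0,1\}^* \to \{0,1\}^b$ with $b = \lceil \log_2 |\mcc| \rceil$, and fixes a public, deterministic, onto map $\phi : \{0,1\}^b \to \mcc$ so that $F_S = \phi \circ G_S$. Correspondingly, the MPTC \textit{combine} is the coin-tossing combine followed by $\phi$: given $f+1$ shares and $r$ it first reconstructs a $b$-bit value and then applies $\phi$ to obtain an element of $\mcc$.

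Next comes the core implication. Suppose $A$ produces $r$ and $f+1$ valid shares of $r$ for some $S$ whose MPTC-combination differs from $F_S(r)$; writing $c$ for the $b$-bit value reconstructed by the coin-tossing combine, this means $\phi(c) \neq \phi(G_S(r))$. Since $\phi$ is a function, unequal outputs force unequal inputs, so $c \neq G_S(r)$. Hence every robustness violation at the configuration level is already a robustness violation of the coin-tossing scheme on $G_S$. The reduction is therefore immediate: an efficient $A$ breaking MPTC robustness, composed with the trivially computable public map $\phi$, breaks the coin-tossing robustness of $G_S$ with at least the same success probability, contradicting the assumption. Note that the non-injectivity of $\phi$ only ever merges distinct $b$-bit strings, so it can mask some coin-level violations but can never manufacture a spurious configuration-level one; the implication we need runs in exactly the safe direction, which is why I expect this to be the one point that actually requires care rather than any heavy calculation.

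Finally I would account for the quantifier over \emph{any} participant set. Each $S \in PS$ is an independent instance of the coin-tossing scheme with its own key material, and within any fixed $S$ the adversary compromises at most $f$ of the $\mcp_f$ parties, matching the scheme's threshold. Since $|PS| = \binom{|\mcn|}{\mcp_f}$ is a fixed quantity independent of the security parameter, a union bound over the participant sets of the (individually negligible) per-$S$ success probabilities keeps the overall advantage negligible. Once the reduction structure above is in place, the result follows directly from the assumed robustness of the scheme in \cite{Cachin05}.
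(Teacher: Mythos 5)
Your proposal is correct and takes essentially the same route as the paper: the paper's entire proof is a one-line appeal to the robustness property of the threshold coin-tossing scheme proven in~\cite{Cachin05}, which is exactly the reduction you carry out. You merely make explicit what the paper leaves implicit---the factorization $F_S = \phi \circ G_S$ through the public onto map $\phi : \{0,1\}^b \rightarrow \mcc$, the observation that $\phi(c) \neq \phi(G_S(r))$ forces $c \neq G_S(r)$ (so configuration-level violations imply coin-level violations), the union bound over the finitely many participant sets, and the reading of the lemma statement as ``is \emph{not} $F_S(r)$'' (an apparent typo in the paper, given the definition of robustness in Section~\ref{ssec:crypto_prims})---all of which is sound.
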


\begin{proof}
It follows directly from the robustness property of the threshold coin-tossing 
scheme we use for computing the next round's configuration in each MPTC round. 
The property is proven in~\cite{Cachin05}.
\end{proof}

\begin{lemma}
\label{lem:mptc_unpredictability}
\emph{Unpredictability}: Let $C_r^A$ be $A$'s prediction for $F_{S_r}(r)$ after 
learning at most $f$ function shares for $F_{S_r}(r)$ as well as any number of 
functions shares for $F_{S_{r^\prime}}(r^\prime)$ for arbitrary many $r^\prime 
< r$. Then the probability of $C_r^A = F_{S_r}(r)$ is at most $\frac{1}{|\mcc|} 
+ \epsilon$ where $\epsilon \in \mathbb{R}$ is negligible.
\end{lemma}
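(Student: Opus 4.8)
The plan is to reduce the unpredictability of MPTC's configuration sequence directly to the extended unpredictability property of the threshold coin-tossing scheme stated in Section~\ref{ssec:crypto_prims}. Recall that for each participant set $S$ the dealer shares a function $F_S$ by selecting a secret key and distributing shares via a $(\mcp_f, f+1, f)$ threshold coin-tossing scheme, and that $F_S(r)$ is obtained by composing the scheme's underlying $\{0,1\}^* \rightarrow \{0,1\}^b$ output (with $b = \lceil \log_2|\mcc|\rceil$) with an onto map $\{0,1\}^b \rightarrow \mcc$. The key observation is that the adversary's knowledge, as described in the lemma, consists of at most $f$ function shares of $F_{S_r}(r)$ together with shares for $F_{S_{r'}}(r')$ at earlier rounds $r' < r$. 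The hypotheses of the extended unpredictability property require strictly fewer than $f + 1 - f = 1$ function shares from \emph{correct} processes; since $A$ compromises at most $f$ processes in $S_r$, the shares it holds for $F_{S_r}(r)$ come entirely from compromised processes, so it has zero shares from correct processes and the unpredictability hypothesis is satisfied.

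First I would argue that the earlier-round shares give $A$ no advantage. The functions $F_{S_{r'}}$ for $r' < r$ are generated by the dealer using \emph{independent} secret key pairs, one per participant set; hence shares of $F_{S_{r'}}(r')$ are distributed independently of the key material underlying $F_{S_r}$. Even when $S_{r'} = S_r$, the extended unpredictability property explicitly covers a sequence of inputs $C_i$ and guarantees that knowledge of $F_S$ evaluated at other points yields negligible advantage in predicting $F_S$ at a fresh input. So I would invoke the extended property to conclude that, in the random oracle model, the distribution of the bit string $F_{S_r}(r) \in \{0,1\}^b$ that $A$ must predict is computationally indistinguishable from uniform on $\{0,1\}^b$, up to a negligible $\epsilon$.

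Next I would translate the bound on $\{0,1\}^b$ into a bound on $\mcc$. Since the onto map $\{0,1\}^b \rightarrow \mcc$ is chosen by the dealer so that $F_S$ is computationally indistinguishable from a uniformly random function into $\mcc$ (as stated in the initialization description), each configuration $C \in \mcc$ is the output with probability essentially $\frac{1}{|\mcc|}$. Therefore any predictor $A$ outputs the correct configuration $F_{S_r}(r)$ with probability at most $\frac{1}{|\mcc|} + \epsilon$ for negligible $\epsilon$; otherwise $A$ would constitute a distinguisher contradicting the extended unpredictability property, giving the desired contradiction.

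The main obstacle is handling the \emph{correlation} between the earlier-round shares and the target evaluation cleanly. In the worst case $S_{r'} = S_r$ for many $r' < r$, so $A$ accumulates many function shares of the \emph{same} shared function $F_{S_r}$ at inputs $r' \neq r$, plus up to $f$ compromised-process shares at the target input $r$. I would address this by appealing precisely to the \emph{extended} unpredictability property rather than the single-input version, since the former is exactly designed to guarantee negligible advantage on a fresh input $C_i = r$ given shares at the other inputs, provided fewer than one correct-process share of the target is known --- which holds here. I would note that all reductions are in the random oracle model assumed throughout, matching the setting in~\cite{Cachin05} from which the properties are imported, and that the negligible term $\epsilon$ absorbs both the coin-tossing advantage and the indistinguishability slack of the map into $\mcc$.
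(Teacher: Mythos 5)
Your proof takes essentially the same route as the paper's: both reduce the claim directly to the extended unpredictability property of the $(\mcp_f, f+1, f)$ threshold coin-tossing scheme of~\cite{Cachin05}, instantiated with output length $b = \lceil \log_2 |\mcc| \rceil$, and conclude the $\frac{1}{|\mcc|} + \epsilon$ bound. Your version is more detailed --- you explicitly verify the zero-correct-share hypothesis, dispose of the earlier-round shares via key independence and the sequence form of the property, and justify the passage from uniform bit strings to configurations using the dealer's indistinguishability guarantee, all of which the paper's brief proof leaves implicit --- but the underlying argument is identical.
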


\begin{proof}
This property follows from the implementation of each $F_S$ as a $(\mcp_f, f+1, 
f)$ threshold coin-tossing scheme and from the extended unpredictability 
property of a sequence of coins produced by this scheme shown 
in~\cite{Cachin05}. By selecting the length of the sequence to be $m = \lceil 
\log |\mcc| \rceil$ we ensure that the probability of $A$ predicting the next 
configuration having compromised at most $f$ processes in the active 
participant set is $\frac{1}{2^m} + \epsilon \leq \frac{1}{|\mcc|} + \epsilon$ 
for negligible security parameter $\epsilon$.
\end{proof}

The previous termination discussion relies on the random oracle assumption we 
made earlier. While this assumption is important for supporting the 
unpredictability and robustness properties of our configuration generation 
scheme it is not necessary if such properties are not needed. If we wanted to 
drop this assumption, we would need to place additional restrictions on $\mcp$ 
and $\mcc$ to satisfy termination for MPTC. More specifically, we need to 
ensure that any interleaving in $\mcc$ is conforming. Under that assumption the 
termination results still hold since any infinite execution consists of 
conforming interleavings in which at least some correct process makes a 
decision.

%% BMPTC Correctness
%\input{byzantine_correctness}

% Byzantine extension
\section{Extension to Byzantine Failures}
\label{sec:mptc_byzantine}

\subsection{Byzantine agreement model}
Our MPTC protocol can be extended to support byzantine agreement. Under 
this weaker adversary assumption, we call a process honest if it faithfully 
executes the protocol. An honest process may crash during the execution but up 
to the point of crash its execution does not deviate from the protocol 
description. Observe that in our previous model (Section~\ref{sec:mptc_model})
all processes where honest. We call a process correct if it is honest and 
eventually makes progress. This implies that a correct process never crashes. 
Faulty processes, on the other hand, may deviate arbitrarily from the protocol 
but cannot alter the secret state.

The specification of the protocols in $\mcp$ also change. Each protocol in 
$\mcp$ is now a \emph{Byzantine Fault-Tolerant} (BFT) agreement protocol. In 
this problem, the objective is for all honest processes to agree on the same 
value. The correctness criteria are as follows:

\begin{itemize}
    \item \emph{Agreement}: If two honest processes decide, they decide the 
    same value.
    \item \emph{Validity}: If an honest process decides value $v$, then $v$ was 
    proposed by at least some process.
    \item \emph{Unanimity}: If all honest processes propose the same value $v$, 
    then an honest process that decides must decide $v$.
    \item \emph{Termination (Definite)}: All correct processes must eventually 
    decide.
    \item \emph{Termination (Probabilistic)}: All correct processes must 
    eventually decide with probability~1.
\end{itemize}

Note that it now holds $\mcp_f \geq 3f + 1$. Also note that the byzantine 
failures assumption is now part of our model assumptions set $\mca$ and thus is 
included in $\mca_\mcp$ for any set of BFT protocols $\mcp$. The round outcomes 
framework described in Section~\ref{sec:mptc_model} still hold under the 
following modifications on invariants~\ref{inv:decision} 
and~\ref{inv:undecided}:

\begin{invariant}
\label{inv:byz_decision}
If there exists honest $p \in \mcn,\ r \in \mathbb{N}$ such that $o_p^r = (D, 
v)$ where $v \in \mcv$ then for each correct $q \neq p \in \mcn$ it holds that 
$o_q^r = (M, v)$ or $o_q^r = (D, v)$.
\end{invariant}

\begin{invariant}
\label{inv:byz_undecided}
If there exists honest $p \in \mcn$, $r \in \mathbb{N}$ such that $o_q^r = 
(U, v)$ for some $v \in \mcv$ then there is no honest $q \in \mcn$, $u \in 
\mcv$ such that $o_q^r = (D, u)$ and $u \neq v$.
\end{invariant}

This new version of invariants refer to honest processes since faulty processes 
may update their state arbitrarily at any point in time. Thus the processes' 
outcomes are meaningful only for honest and correct processes.

\subsection{Trusted dealer protocol}
Let $\mcp$ be a set of BFT protocols that tolerate up to $f$ failures. The 
trusted dealer initialization protocol now becomes:

\begin{enumerate}
	\item The dealer assigns an identity for each $p \in \mcn$ using a public -
	key cryptography scheme. It generates a public-private key pair, 
	$(\textit{public}_p, \textit{private}_p)$ for each $p \in \mcn$.
    \item For each $S \in PS$ the dealer picks a function $F_S: \mathbb{N} 
    \rightarrow \mcc$ and generates a verification key $VK_S$ and for each $q 
    \in S$ a secret share, $h_S^q$ and a verification key $VK_S^q$.
    \item $T$ distributes shares and keys to processes over secure channels. 
    $\forall S \in PS$ each process $p \in S$ receives $h_S^p$, $VK_S$, and 
    $VK_S^q$, $\forall q \in S$. In addition, each process $p$ receives 
    $(\textit{public}_p, \textit{private}_p)$ and the public keys of all other 
    processes.
\end{enumerate}

In the byzantine case, we need to use a verifiable secret sharing scheme 
like~\cite{Feldman87} since we need a way to ensure that invalid 
function shares created by faulty processes can be identified and discarded by 
honest ones. In such schemes, a verification function is specified which 
typically works by receiving a value $r$, a share of $F_S(r)$ and some 
verification keys that depend on $F_S$ and the secret share used to generate 
the previous function share and outputs 1 or 0 indicating whether the share 
provided is a valid share of $F_S(r)$ or not. We define our verification 
function, \emph{verify}, as follows:

\begin{equation}
\textit{verify}: \mathbb{N} \times \mcf \times \mck^2 \rightarrow \{0, 1\} 
\nonumber
\end{equation}

\noindent where $\mck$ is the space of verification keys. Note that the 
\emph{verify} above is modeled after the share verification algorithm presented 
in~\cite{Cachin05}. Given a function share $F^p_S(r)$, it receives two 
verification keys, $VK_S$, and $VK_S^p \in \mck$, the first produced using 
$F_S$ and the second using $h^p_S$. 

$T$ can generate these verification keys using a secure cryptographic hash 
function, that is a function that is easy to compute but computationally 
infeasible to reverse. Feldman~\cite{Feldman87} provided some example functions 
with this property, with RSA being one of them. We abstract away such details 
and denote by $\textit{hash}: \{0, 1\}^* \rightarrow \mck$ a function that has 
this property. $T$ can use \emph{hash} to generate the above 
mentioned verification keys during phase 1 (after using split) as follows:

\begin{eqnarray}
VK_S &=& \textit{hash}(F_S), \forall S \in PS \nonumber \\
VK_S^p &=& \textit{hash}(h_S^p) \forall p \in S, \forall S \in PS \nonumber
\end{eqnarray}

Note that our \emph{hash} function takes different types of input, such 
as $[\{0, 1\}^* \rightarrow \mcc]$ and $\mcs$, but both are bit strings and so 
are the elements of $\mck$ and $\mcf$. While we are using different domain 
notations to distinguish between functions, secret and function shares, and 
verification keys, any implementation of these schemes is working with bit 
strings.

\subsection{Byzantine Tolerant MPTC}

The Byzantine version of our MPTC protocol is similar to that in 
Section~\ref{ssec:prt_dsc}. The processes maintain the same state as in 
Section~\ref{ssec:prt_dsc} plus the cryptographic keys generated by the dealer. 
All messages are signed using these keys. Messages are of the form $\langle 
\textit{Phase number}$, $\textit{process id}$, $\textit{round}$, 
\textit{signature}, $\ldots \rangle$ and processes ignore messages with invalid 
signatures or messages not destined to them. The protocol operates in rounds, 
the configuration of the first round, $C_0$, is determined by $T$ and is known 
by all processes and each round proceeds as follows:

\begin{itemize}
    \item \textbf{Phase 1}: Each process $p \in S_r$ runs a round of the 
    BFT protocol specified by $C_r$. 
    %Once $p$ computes an outcome $o_p$ for round $r$, it goes to Phase 2.
    Let $o_p$ be $p$'s outcome for round $r$. If $o_p = (D, v)$, then process 
    $p$ updates $\textit{proposal}_p=v$, decides $v$ and never updates $o_p$ 
    and $\textit{proposal}_p$ again in any future round. If $o_p = (M, v)$, 
    then $p$ updates $\textit{proposal}_p=v$. Regardless of $o_p$'s value, $p$ 
    goes to Phase 2.
    \item \textbf{Phase 2}: 
    \begin{itemize}
        \item \textit{Step 1}: Each $p \in S_r$ computes function share 
        $F_{S_r}^p(r) = GFS(h_{S_r}^p, r)$ and sends a Phase 2 message 
        $m = \langle 2, p, r_p, o_r^p$, $\textit{sign}_m, F_S^p(r) \rangle$ to 
        all processes in $S_r$, where $\textit{sign}_m$ is the signature of 
        message $m$. Then $p$ waits for $\mcp_f - f$ Phase 2 messages with 
        valid function shares from processes in $S_r$. Each process receiving a 
        Phase 2 message checks the validity of the function share contained 
        within using \textit{verify}$(r, F_S^p(r), VK_S, VK_S^p)$. Once $p$ 
        receives enough such messages from some $Q \subseteq S_r$, it proceeds 
        to Step 2.
        \item \textit{Step 2}: If $o_p = (U, *)$ where $*$ can be any value in 
        $\mcv$, then $p$ updates its proposal to a value $v$, selected 
        arbitrarily from the outcomes contained in the received Phase 2 
        messages. It also updates $o_p = (U, v)$.
%        \item \textit{Step 2}: Once it receives Phase 2 messages from a set of 
%        $\mcp_f - f$ processes, $Q \subseteq S_r$ and if $o_p = (U, *)$ it 
%        updates its outcome and proposal as described below. Let $R$ denote
%        the set of outcomes received:
%        \begin{itemize}
%	        \item Case 1: If there are at least $f+1$ outcomes in $R$ of the 
%	        form $(D, v)$ for some value $v \in \mcv$, then process $p$ updates 
%	        $\textit{proposal}_p=v$, decides $v$, sets its outcome $o_p = (D, 
%	        v)$ and never updates $o_p$ and $\textit{proposal}_p$ again in any 
%	        future round.
%	        \item Case 2: If $\forall o \in R$ it holds $o = (M, v)$ for some 
%	        $v \in \mcv$ then $o_p = (M, v)$ and $\textit{proposal}_p = v$.
%	        \item Case 3: If Case 1 did not apply and $(U, *) \in R$ where $*$ 
%	        can be any value in $\mcv$ or $\exists o, o^\prime \in R$ and $v, 
%	        v^\prime \in \mcv$ such that $(o = (M, v) \wedge o^\prime = (M, 
%	        v^\prime)) \wedge (v \neq v^\prime)$, then $p$ selects an 
%	        outcome $(*, v) \in R$ where $*$ can be any value in $\{M, U\}$ and 
%	        $v$ is the most frequent value in $R$, breaking ties arbitrarily. 
%	        $p$ then updates $o_p = (U, v)$ and $\textit{proposal}_p = v$.
%        \end{itemize}
        \item \textit{Step 3}: Let $F_S^Q(r)$ be the set of valid function 
        shares received by processes in $Q$. $p$ computes the configuration of 
        the next round, $r+1$, as $C_{r+1} = \textit{combine} (F_{S_r}^Q(r), 
        r)$ and moves on to Phase 3.
    \end{itemize}
    \item \textbf{Phase 3}: Each $p \in S_r$ sends a Phase 3 message $m = 
    \langle 3, p, r_p, o_p, \textit{sign}_m, C_{r+1} \rangle$ to each process 
    in $S_{r+1}$. $p$ update its state: $r_p = r+1$, $c_p = C_{r+1}$ and if it 
    is still undecided, it updates its outcome $o_p = \bot$. Each process $q 
    \in S_{r+1}$ that receives Phase 3 messages with the same configuration 
    value, $C_{r+1}$, from $\mcp_f - f$ processes, updates its proposal as 
    described below. Let $R$ denote the set of outcomes received:
    \begin{itemize}
        \item Case 1: If there are at least $f+1$ outcomes in $R$ of the 
        form $(D, v)$ for some value $v \in \mcv$, then process $q$ updates 
        $\textit{proposal}_q=v$, decides $v$, sets its outcome $o_q = (D, 
        v)$ and never updates $o_q$ and $\textit{proposal}_q$ again in any 
        future round.
        \item Case 2: If $\exists v \in \mcv$ such that at least $f+1$ outcomes 
        in $R$ are $(D, v)$ or $(M,v)$ then $\textit{proposal}_q = v$.
        \item Case 3: Otherwise, $q$ selects an outcome $(*, v) \in R$ where 
        $*$ can be any value in $\{M, U\}$ and $v$ is the most frequent value 
        in $R$, breaking ties arbitrarily. $q$ then updates 
        $\textit{proposal}_p = v$.
    \end{itemize}
    Then $q$ sets $r_q = r+1$, $c_q = C_{r+1}$ and if it is still undecided, it 
    sets $o_q = \bot$. Finally, it starts the next round.
\end{itemize}

The previous protocol runs for an unbounded number of rounds, like our main 
MPTC protocol, and eventually reaches a state in which at least $f+1$ correct 
processes decide and thus all correct processes can learn that decision.

\subsection{Correctness discussion}

The byzantine version of MPTC must satisfy the properties described above as 
well as robustness and unpredictability. We again assume a valid $\mcp$ 
containing correct BFT protocols, MPTC execution under $\mca_\mcp$ as well as a 
set of configurations $\mcc$ such that a conforming interleaving exists. The 
arguments for all of these properties are similar to those in 
Section~\ref{ssec:mptc_proofs}. In fact, the termination arguments for the 
byzantine MPTC are exactly the same. Robustness and unpredictability are also 
exactly the same and our arguments are mainly borrowed from the corresponding 
arguments of the Byzantine agreement protocol of~\cite{Cachin05}. For the rest 
of the properties we have the following results:

\begin{lemma}
\label{lem:byz_mptc_validity}
Byzantine MPTC satisfies validity.
\end{lemma}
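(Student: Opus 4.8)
The plan is to follow the same two-part structure as the proof of the non-Byzantine validity result (Lemma~\ref{correctness:validity}), adapting it to the Byzantine setting. Byzantine validity states that if an honest process decides a value $v \in \mcv$, then $v$ was proposed by at least some process. An honest process can decide in one of two places: during Phase~1, when the underlying BFT protocol $P_r \in \mcp$ produces an outcome $(D, v)$, or during Phase~3, via Case~1 of the proposal-update rules. First I would dispense with the Phase~1 case: since every $P \in \mcp$ is a correct BFT protocol under $\mca_\mcp$, it satisfies BFT validity by assumption, so any value it lets an honest process decide was proposed by some process. The remaining work is to show that the cross-phase proposal propagation of MPTC never manufactures a value out of thin air.

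The heart of the argument is an invariant: every value that ever appears in the $\textit{proposal}$ field of an honest process, or in the value component of an honest process's outcome, traces back to an input value $x_p$ of some process. I would establish this by induction on the round number. In the base case, each honest process $p$ begins with $\textit{proposal}_p = x_p$, so the invariant holds. For the inductive step, I would examine every point at which an honest process updates its proposal or outcome and confirm that the new value is drawn from values already carried by outcomes of that round: in Phase~1 the outcome is produced by $P_r$, whose validity guarantees the value was proposed; in Phase~2 Step~2, an undecided process copies a value $v$ taken from the outcomes contained in received Phase~2 messages; and in Phase~3, all three cases (Case~1 decides on a $(D,v)$ value present in $R$, Case~2 adopts a $(M,v)$ value present in $R$, Case~3 selects the most frequent value in $R$) choose $v$ from the set $R$ of received outcomes. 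In each instance the new value is one that already satisfied the invariant in a message sent by some process during the current round, so the invariant is preserved.

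The main obstacle, and the key difference from the crash-only proof, is that faulty processes may inject arbitrary values into the Phase~2 and Phase~3 message sets $R$, so I cannot simply say ``all received values trace back to inputs.'' The crucial observation is that Byzantine validity only constrains what an \emph{honest} process \emph{decides}, and an honest decision happens only through Phase~1 (governed by $P_r$'s own validity) or through Case~1 of Phase~3. In Case~1 the Byzantine protocol requires at least $f+1$ outcomes of the form $(D,v)$ in $R$; since at most $f$ of the contributing processes are faulty, at least one of those $(D,v)$ outcomes was sent by an honest process, and by the inductive invariant that honest $(D,v)$ outcome carries a value traceable to some input. Thus even though faulty processes can pollute $R$, the $f+1$ threshold guarantees that any decision value is witnessed by an honest outcome and hence is a genuine input value. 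I would close the induction using this honest-witness argument and conclude that any value decided by an honest process, in either phase, equals $x_p$ for some $p \in \mcn$, which is exactly Byzantine validity.
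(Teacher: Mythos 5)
Your reduction of Phase~3 decisions to an honest witness via the $f+1$ threshold is exactly the paper's argument, and your treatment of Phase~1 via the underlying BFT validity matches as well. But there is a genuine gap in the rest of your plan: the inductive invariant you rely on --- that every value appearing in an honest process's proposal or outcome traces back to an original input $x_p$ --- is false in the Byzantine setting, and the obstacle you yourself identify is never actually overcome. In Phase~2 Step~2 an undecided honest process adopts a value taken arbitrarily from received Phase~2 messages, up to $f$ of which may come from faulty processes carrying fabricated values; the same pollution is possible in Phase~3 Cases~2 and~3. Such a fabricated value then becomes the honest process's legitimate proposal entering Phase~1 of the next round, and the underlying BFT protocol's validity permits deciding it, since it was ``proposed by some process'' in that round. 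So when you write ``by the inductive invariant that honest $(D,v)$ outcome carries a value traceable to some input,'' you are appealing to exactly the statement you conceded you could not establish: the honest witness's Phase~1 decision value need not trace to any original input. Consequently your final conclusion --- that any honest decision equals $x_p$ for some $p \in \mcn$ --- is both unprovable and, in fact, false for this protocol.

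The fix is to notice that the Byzantine validity property being proved is weaker than the crash-case one (Lemma~\ref{correctness:validity}): it asserts only that the decided value was \emph{proposed by at least some process}, where proposals include the possibly polluted or faulty-originated values that processes bring into each round. The paper's proof therefore needs nothing beyond your two correct steps: any Phase~1 decision respects the underlying BFT validity, and a Phase~3 (Case~1) decision requires $f+1$ matching $(D,v)$ outcomes, hence at least one honest witness that computed $(D,v)$ in Phase~1, whose value was therefore proposed by some process. No induction on proposal provenance is needed, and none would succeed.
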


\begin{proof}
By correctness of the underlying BFT, any decision made by an honest process 
during Phase 1 respects validity. A decision during Phase 3 can only 
occur if at least $f+1$ processes send the same decision outcome $(D, v)$ for 
some $v \in \mcv$. For that to happen at least one honest process must have 
computed that outcome during Phase 1. Thus that value must have been proposed 
by some process.
%It holds trivially due to the fact that the first decision by an honest process is done during Phase 2 and by validity of the BFT run during that round it must have been proposed.
\end{proof}

\begin{lemma}
\label{lem:byz_mptc_agreement}
Byzantine MPTC satisfies agreement.
\end{lemma}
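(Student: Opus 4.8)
The plan is to mirror the structure of the crash-fault agreement argument (Lemma~\ref{correctness:agreement}) and argue by contradiction, but to replace every ``$\mcp_f - f$ correct participants agree'' step with a Byzantine quorum-counting step that exploits $\mcp_f \geq 3f+1$ together with the modified invariants (Invariants~\ref{inv:byz_decision} and~\ref{inv:byz_undecided}). So I would suppose that two honest processes $p$ and $p^\prime$ decide $v$ and $v^\prime$ with $v \neq v^\prime$ in rounds $r$ and $r^\prime$, and split into the cases $r = r^\prime$ and $r \neq r^\prime$.

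For the same-round case, I would first observe that an honest process can decide either in Phase~1 (directly from the underlying BFT) or in Phase~3 (Case~1). A Phase~3 decision of $v$ now requires at least $f+1$ received outcomes of the form $(D,v)$; since at most $f$ of the senders are faulty, at least one \emph{honest} participant must already have computed $(D,v)$ in Phase~1 of that round. Applying the same reasoning to $v^\prime$ would force two honest participants of round $r$ to output $(D,v)$ and $(D,v^\prime)$ in Phase~1, contradicting the agreement property of the underlying BFT protocol, equivalently Invariant~\ref{inv:byz_decision}. The mixed subcases (one decision in Phase~1, the other in Phase~3) collapse to the same contradiction.

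For the different-round case I would assume without loss of generality $r < r^\prime$ and establish a \emph{value-locking} invariant by induction on the round: if every honest process in $S_{\bar r}$ computes $(D,v)$ or $(M,v)$ in Phase~1, then every honest process in $S_{\bar r + 1}$ enters round $\bar r + 1$ with $\textit{proposal} = v$, and hence, by validity and unanimity of the underlying BFT, every honest Phase-1 outcome of round $\bar r + 1$ is again $(D,v)$ or $(M,v)$. The base case $\bar r = r$ holds because $p$ decided $v$ at round $r$, so Invariant~\ref{inv:byz_decision} yields $(D,v)$ or $(M,v)$ at every correct participant of $S_r$. Iterating the invariant from $r$ up to $r^\prime$ then shows that no honest process can ever compute an outcome carrying $v^\prime$, so $p^\prime$ cannot decide $v^\prime$ at $r^\prime$, which is the desired contradiction.

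The crux, and the step I expect to be hardest, is the inductive step's claim that the Phase-3 update forces $v$ at \emph{every} honest process in $S_{\bar r + 1}$ even though up to $f$ faulty participants may inject fabricated $(D,v^\prime)$ or $(M,v^\prime)$ outcomes (their signatures are valid, so they cannot simply be discarded). Here I would use the quorum bound $\mcp_f - f \geq 2f+1$: each honest $q$ collects at least $2f+1$ Phase-3 messages, of which at least $f+1$ come from honest senders and all carry $v$, while at most $f$ carry any competing $v^\prime$. Consequently Case~1 cannot fire for $v^\prime$ (it needs $f+1$ copies of $(D,v^\prime)$), Case~2 can fire only for $v$, and in Case~3 the value $v$ is \emph{strictly} the most frequent value in $R$; in every branch $q$ adopts $v$. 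I would also confirm the companion counting argument already used in Lemma~\ref{lem:byz_mptc_validity}, namely that a Phase-3 decision of $v^\prime$ is impossible unless some honest process genuinely output $(D,v^\prime)$ in Phase~1, which ties the two cases together and completes the contradiction, forcing $v = v^\prime$.
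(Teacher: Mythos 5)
Your overall strategy is the same as the paper's: its proof likewise observes that a Phase-3 decision requires at least $f+1$ identical $(D,v)$ outcomes, hence at least one honest Phase-1 decision, and then defers to ``similar arguments to those in Lemma~\ref{correctness:agreement}'' for the same-round and cross-round cases. Your same-round case is fine, and your Phase-3 quorum counting (Case~1 cannot fire for $v^\prime$, Case~2 only for $v$, $v$ strictly most frequent in Case~3) is correct \emph{under your induction hypothesis} and genuinely fills in what the paper leaves implicit.

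There is, however, a gap in the inductive step, precisely at the point you flagged as the crux. You claim that validity and unanimity of the underlying BFT force every honest Phase-1 outcome of round $\bar r + 1$ to be $(D,v)$ or $(M,v)$. Those properties constrain only the \emph{value} carried by honest outcomes, not the tag: the paper's own Lemma~\ref{lem:byz_mptc_unanimity} explicitly allows honest processes that all propose $v$ to end Phase~1 with $(U,v)$. The tag matters because your quorum argument protects only Phase~3; Phase~2, Step~2 has no Byzantine defense at all: an honest process whose outcome is $(U,\cdot)$ updates its proposal to a value ``selected arbitrarily'' from the received Phase-2 outcomes, and up to $f$ of those outcomes may be fabricated by faulty senders (the \textit{verify} check validates only the function share, not the outcome field, and a faulty process signs its own lie validly). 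So if honest $(U,v)$ outcomes can occur in a round after the decision, those processes can be steered to $v^\prime$ in Phase~2, after which the premise ``at least $f+1$ honest senders carry $v$'' fails for the next round's Phase~3, and the induction collapses. To close this you must either invoke the framework's semantics of the $U$ tag --- $(U,v)$ asserts that no process has decided \emph{up to the current round}, so no honest process may produce a $U$ outcome in any round at or after round $r$ --- or weaken the induction hypothesis to ``all honest outcomes carry value $v$'' and then confront Phase~2, Step~2 directly, which counting alone cannot handle. To be fair, the paper's own sketch glosses over exactly the same point: the crash-case argument it defers to is safe only because there every process is honest, which is what makes the arbitrary selection in Phase~2 harmless.
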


\begin{proof}
Note that Phase 1 respects agreement by the correctness of the protocols in 
$\mcp$. Observe that honest processes can only decide during Phase 3 if 
at least $f+1$ processes have decided the same value. This means that if an 
honest process decides $v$ during Phase 3 of MPTC at least one honest 
process has already decided $v$. Using similar arguments to those in 
Lemma~\ref{correctness:agreement}, we can show that at any point during MPTC 
execution any two honest processes that decide must decide the same value.
\end{proof}

\begin{lemma}
\label{lem:byz_mptc_unanimity}
Byzantine MPTC satisfies unanimity.
\end{lemma}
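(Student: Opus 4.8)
The plan is to prove unanimity by induction on the round number, maintaining the invariant $I(r)$: at the start of round $r$, every honest process that has not yet decided holds proposal $v$, and every honest process that has already decided, decided $v$. The base case $I(0)$ is exactly the hypothesis of the lemma, since initially no honest process has decided and all of them propose $v$. The conclusion of the lemma follows from $I(r)$ holding for all $r$, because an honest decision can only occur in Phase~1 or in Phase~3, and I will show both yield $v$. The whole argument runs in parallel to the agreement proof of Lemma~\ref{correctness:agreement} and Lemma~\ref{lem:byz_mptc_agreement}, but tracks the \emph{value} carried by honest outcomes rather than merely the decision.

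For the inductive step, assume $I(r)$. In Phase~1 the honest active participants in $S_r$ all propose $v$, so by the unanimity property of the underlying BFT protocol $P_r$ (which holds under $\mca_\mcp$), together with Invariants~\ref{inv:byz_decision} and~\ref{inv:byz_undecided}, every honest participant's Phase~1 outcome carries value $v$, whether it is of the form $(D,v)$, $(M,v)$ or $(U,v)$. Consequently the Phase~1 proposal update (for outcomes $(D,v)$ and $(M,v)$) leaves honest proposals equal to $v$. The only transient exception is the arbitrary selection in Phase~2 Step~2 for a process whose outcome is $(U,\ast)$; I will argue that this transient choice cannot affect the conclusion, because every step that later fixes a decision or a next-round proposal is gated by a threshold that forces an honest contributor carrying $v$.

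The decisive use of the thresholds is in Phase~3, where each process of $S_{r+1}$ collects $\mcp_f - f \geq 2f+1$ outcomes, of which at least $f+1$ come from honest senders and at most $f$ from faulty ones. For a Phase~3 decision (Case~1), at least $f+1$ outcomes must equal $(D,w)$; any such set contains at least one honest sender, whose $(D,w)$ outcome carries $w=v$ by the Phase~1 analysis (or, if inherited from an earlier decision, by $I$ at that earlier round), so the decider decides $v$. For maintaining the invariant into round $r+1$, Case~2 requires $f+1$ outcomes of the form $(D,v')$ or $(M,v')$, again forcing an honest contributor and hence $v'=v$; and in Case~3 the chosen value is the most frequent one in $R$, which must be $v$ since at least $f+1$ of the received outcomes are honest and equal to $v$ while any faulty value appears at most $f < f+1$ times. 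Thus every honest process of $S_{r+1}$ sets its proposal to $v$ (or has already decided $v$), establishing $I(r+1)$.

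The main obstacle is precisely the arbitrary proposal selection of Phase~2 Step~2 on the $(U,\ast)$ branch, since a faulty sender could supply a value distinct from $v$ among the received Phase~2 messages. The plan is to neutralise this by observing that such an arbitrary choice never survives into a decision or into an honest next-round proposal: every consequential assignment occurs in Phase~3 under one of the three cases above, each of which is dominated by the $\geq f+1$ honest outcomes equal to $v$ among the $\geq 2f+1$ received. Hence the transient drift is always overwritten by $v$, the induction closes, and unanimity follows.
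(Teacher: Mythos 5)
Your skeleton matches the paper's proof: unanimity of the underlying BFT protocol forces every honest Phase~1 outcome to carry $v$, and the counting argument ($\mcp_f - f \geq 2f+1$ received outcomes, at most $f$ faulty, hence at least $f+1$ honest) is used to push $v$ into the next round; your treatment of Phase~3 Cases~1 and~2 is sound, since honest $(D,\cdot)$ and $(M,\cdot)$ outcomes are created only in Phase~1 (or inherited from earlier decisions) and therefore carry $v$. The genuine gap is your claim that the arbitrary selection in Phase~2 Step~2 is ``transient'' and ``never survives'' into Phase~3. It survives by construction: a process $p$ with $o_p=(U,v)$ that adopts a value $w$ supplied by a faulty sender also sets $o_p=(U,w)$, and this $o_p$ is exactly what $p$ transmits in its Phase~3 message. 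Now take a round in which the underlying BFT round decides nothing, so every honest participant exits Phase~1 with $(U,v)$ --- Invariants~\ref{inv:byz_decision} and~\ref{inv:byz_undecided} permit this. The at most $f$ faulty participants, which hold valid secret shares and can therefore produce Phase~2 messages that pass share verification, all advertise outcome $(M,w)$; each honest participant may then ``arbitrarily'' adopt $w$, so every honest outcome arriving in Phase~3 is $(U,w)$. Cases~1 and~2 do not fire (fewer than $f+1$ outcomes of the required form), and Case~3's most-frequent rule selects $w$, not $v$, destroying your invariant $I(r+1)$; in round $r+1$ the underlying protocol may then legitimately decide $w$. So the $f+1$ honest outcomes you count on in Case~3 need not ``equal $v$'' at all, and the overwriting argument is circular, because Case~3 is computed from precisely the contaminated outcomes.

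The paper does not attempt your neutralization. Its proof asserts that $v$ is the most frequent value in any set of $\mcp_f - f$ outcomes and concludes that \emph{both} Phases~2 and~3 update honest outcomes and proposals to $v$; in other words, it treats the Phase~2 Step~2 update as frequency-based, so that honest $(U,\cdot)$ outcomes still carry $v$ when they reach Phase~3, and only then does the Phase~3 counting go through. That is the step your version cannot reproduce under the literal ``selected arbitrarily'' rule. To close your induction you would need to strengthen Phase~2 Step~2 in the same way (adopt the most frequent value among the received outcomes, so that the at least $f+1$ honest occurrences of $v$ beat any faulty value, which appears at most $f$ times), or otherwise prevent faulty values from contaminating honest $(U,\cdot)$ outcomes; no after-the-fact argument inside Phase~3 can recover $v$ once every honest outcome already carries $w$.
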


\begin{proof}
Assume that all honest processes are initialized with the same value $v \in 
\mcv$. Since the BFT protocols in $\mcp$ are correct, honest processes getting 
into Phase 1 with the same value, $v$, can only compute outcomes $(D, v)$, $(M, 
v)$ and $(U, v)$. Otherwise it would mean that honest processes can change 
their proposals between rounds of the BFT protocol in question (e.g., via the 
influence of the faulty processes) and eventually decide a different value, 
which would violate unanimity of the BFT protocol. Therefore, at the end of 
Phase 1 all honest processes will end up with outcomes containing $v$. $v$ will 
be the most frequent value in any subset of $\mcp_f - f$ outcomes since $\mcp_f 
\geq 3f + 1$ and at least $2f+1$ processes are correct. As a result Phases 2 
and 3 will have all honest processes updating their outcomes and proposals to 
$v$. Consequently, all honest processes will eventually move to subsequent 
rounds with $v$ as their proposal and thus $v$ will be the only value that can 
be decided by some honest process.
\end{proof}

% Implementation details
\section{Implementation Details}
\label{sec:mptc_implementation_details}

\subsection{MPTC protocol implementation}
\label{sec:protocol_implementation}

To implement MPTC, we need to decide on the following parameters:

\begin{itemize}
    \item The choice of protocol set $\mcp$.
    \item The set of possible configurations $\mcc$, which specifies not only 
    the protocol of each round but also its initialization.
    \item The configuration selection of functions $F_S$, $\forall S \in PS$, 
    generated by the trusted dealer.
    \item The implementation of the \textit{split} function.
    \item The implementation of \textit{GFS} and \textit{combine} used during 
    the main protocol.
\end{itemize}

Our set of protocols, $\mcp$, contains only a single consensus protocol, an 
optimized version of single decree Paxos~\cite{Lamport98}. A round of single 
decree Paxos operates in 2 phases. In the first phase an active leader/proposer 
gets elected and in the second the active leader makes its proposal to the rest 
of the processes who may accept the proposal. If the proposal gets accepted by 
a majority of acceptors, it gets decided. Paxos tolerates $f$ crash failures 
using $2f+1$ processes and thus $\mcp_f = 2f+1$.

Similar to the implementation of Turtle Consensus~\cite{NvR15}, we avoid 
electing leaders in each round by using a parameterized version of single 
decree Paxos in which each round comes with a predetermined leader known to all 
active participants. Assuming an ordering of the processes executing the 
protocol, we can set the leader to be the process whose position in that order 
is equal to round number modulo $2f+1$. Under normal execution conditions, the 
previous optimization yields the following benefits: 1) a decision within a 
single round-trip of communication since it directly executes Phase 2 of Paxos, 
and 2) performance unaffected by contention since the active leader is the only 
proposer.

For handling failures, our optimization assumes the weakest failure detector, 
$\diamond \mcw$, presented in~\cite{CHT96}. If a leader failure is suspected by 
$\diamond \mcw$, then the suspecting processes will complete that round using 
$M$ or $U$ outcomes depending on whether they have received a proposal or not, 
respectively. We implement $\diamond \mcw$ like we did in Turtle Consensus, 
using timeouts with exponentially increasing timeout periods when processes are 
inaccurately suspected. This way we ensure that there will be enough rounds 
executed ``concurrently'' by sufficiently many processes which is critical for 
ensuring termination in our Paxos variant.

In the description above we did not specify how the states of processes running 
Paxos are turned into outcomes at the end of a Paxos round. The process 
executing as the active leader can either decide the value it proposes during 
Phase 2, say $v$, or fail to do so due to either failing or suspecting a 
majority of acceptors. In the first case, it computes outcome $(D,v)$. In the 
second case and if the leader did not fail, it will timeout knowing that if a 
decision was made it must have been for its own proposal, thus computing 
outcome $(M, v)$, where $v$ is its proposal in the beginning of the round. 
Similarly, a process running as an acceptor can end a round either having 
accepted the active leader's proposal, thus computing outcome $(M,v)$, or 
timing out without having received any proposal, in which case it does not know 
anything about the decision progress. In this latter case, we need to ensure 
that if the acceptor becomes the next round's active leader, it will make a 
proposal consistent with already accepted proposals. The way original Paxos 
achieves this, is through its phase 1. As a result our variant requires 
processes exiting the Paxos round without knowledge of the round's decision 
state to retrieve this knowledge from the rest of the processes.

To avoid incurring another round of communication in our Paxos variant we 
piggyback this decision state retrieval onto Phase 2 of MPTC. Timed out 
processes can use the set of outcomes received to update their proposal. The 
update procedure is similar to the one used by processes that have computed 
outcome $(U, *)$. The main difference is that processes without knowledge about 
the outcome of the round send a special ``unknown'' outcome. These ``unknown'' 
outcomes are ignored by receiving processes unless all outcomes received during 
Phase 2 of MPTC are ``unknown'', in which case no decision could have been 
made. In the latter case, the receiving process updates its outcome to $(U, 
v^\prime)$, where $v^\prime$ is the receiver's proposal at the beginning of the 
round, and acts as in Phase 2, Step 2, Case 3. The above optimization ensures 
that if the predetermined leader decides value $v$ at some round $r$ and a 
failure prevents that decision to be learnt in $r$, then all processes during 
Phase 2 of $r$ will receive at least one $(M,v)$ outcome. The receiving 
processes will be forced to adopt $v$ and thus future proposals can only be 
about $v$.

%Unfortunately, the previous variant is not sufficient to ensure correctness. 
%The reason is that phase 1 of original Paxos ensures that future leaders 
%elected when failures or large delays occur, will make proposals that are 
%consistent with previously accepted values. To provide similar guarantees we 
%piggyback, phase 1 of Paxos during Phase 2 of MPTC. Instead of having processes 
%choose an outcome arbitrarily in Phase 2, Step 2, Case 3 we use the following 
%modification: If all $(M, *)$ received outcomes have the same value, the 
%process picks that value. Otherwise it picks the value with the highest 
%popularity breaking ties arbitrarily. The previous modification ensures that if 
%the predetermined leader decides value $v$ at some round $r$ and a failure 
%prevents that decision to be learnt in $r$, then all processes during Phase 2 
%of $r$ will receive at least one $(M,v)$ outcome. The receiving processes will 
%be forced to adopt $v$ and thus future proposals can only be about $v$.

Our set of configurations is simply $\mcc = \{(S, P) | S \in PS\}$ where $P \in 
\mcp$ is the prior Paxos variant. We assume that for every 
participant set there is an ordering of the processes in it. This is easy to 
achieve using the processes unique identifiers, and it facilitates the 
selection of leader of each configuration without having to define additional 
initialization information in each configuration. Note that since all rounds 
execute the same correct consensus protocol, all possible interleavings in 
$\mcc$ are conforming.
%Depending on the current round $r$, the leader in $C_r$ will be the process in $S_r$ ordered in position $(r \mod 2f+1)$.

Observe that in contrast to our prior work on Turtle Consensus we use the same 
protocol across configurations. In Turtle Consensus~\cite{NvR15}, 
different configurations used the same $2f+1$ set of processes. As a result, 
the adversary could try to track the current leader within that set of 
processes even if the leader changed across different configurations. 
Therefore, a competent adversary could eventually locate and force Turtle 
Consensus rounds to fail which can lead to very poor performance. For that 
reason, we kept switching between a leader-based (Paxos) and fully 
decentralized (Ben-Or) consensus protocols across configurations to prevent the 
adversary of exploiting the leader vulnerability. A side-effect of that 
approach, however, was that by falling back to a less efficient protocol (Ben-
Or) we only achieved sub-par performance compared to the graceful execution 
using only Paxos rounds. With MPTC we do not need to employ such tactics since 
the adversary now needs to scan through $|\mcn| \gg f$ processes before it can 
identify the leader of our Paxos configuration.
%and while it is still possible, our evaluation suggests that the cost of changing configuration does not substantially degrade the overall performance.

The remaining parameters of our implementation are related to the cryptographic 
framework assumed by our protocol. While we did not implement this framework 
for our evaluation, we describe for completeness how we could do so in the 
following paragraphs. For the actual implementation that we evaluate in 
Section~\ref{sec:mptc_evaluation}, we assume that all participant sets use the 
same unpredictable function given to all processes via a configuration file. 
This file simply defines a sequence of configurations, one for each round, that 
processes move to in a round-robin fashion, that is use the first configuration 
in round 0, the second in round 1, etc. Once the last configuration in the 
sequence is used, the processes loop back to the first one and continue from 
there. We emulate the restrictions that the cryptographic framework imposes on 
the adversary by assuming that only the processes involved in rounds $r$ and $r+1$ can learn $C_{r+1}$ and only after Phase 2 of round $r$ completes.

A potential implementation for the unpredictable functions is having $F_S = F$, 
$\forall S \in PS$, where $F$ is derived from the threshold coin-tossing scheme 
implementation based on Diffie-Hellman and presented in~\cite{Cachin05}. This 
approach hashes the input value, $r$, using a cryptographically secure hash 
function, modeled as a random oracle, and raises the result to a secret 
exponent. This exponent is shared among the processes using Shamir's secret 
sharing~\cite{Shamir79}. Finally, the result is further hashed to obtain the 
value of $F(r)$ using another cryptographically secure hash function.

Function \textit{split} used by the dealer during initialization can be 
implemented using Shamir's secret sharing as mentioned above. Function 
\textit{GFS} can be implemented by having each process $p$ hashing the input 
round number $r$ and raising it to its secret share of the exponent it received 
from the dealer. Finally, \textit{combine} for (byzantine) MPTC simply 
multiplies a set of $f+1$ distinct (valid) shares and hashes the result. Note 
that the \textit{combine} computation is slightly more complicated and a 
detailed version of its implementation can be found in~\cite{Cachin05}. Also 
note that one can alternatively use any non-interactive threshold-signature 
scheme with the property that there is only one valid signature per message, 
like the RSA-based scheme of Shoup~\cite{Shoup00}. We can then obtain the value 
of the function by hashing the resulting signature computed by $f+1$ signature 
shares on a message containing input $r$.

\subsection{MPTC-based state machine replication}
\label{sec:mptc_smr}

In this section we describe in detail how we implemented SMRP using our MPTC implementation. Messages exchanged in this implementation are of the form:

\begin{equation}
\langle \textit{type}, \textit{src}, \textit{dst}, (\textit{content-
attribute-1, content-attribute-2}, \ldots) \rangle \nonumber
\end{equation}

\noindent where \textit{type} $\in \{\texttt{REQUEST}, \texttt{RESPONSE}, 
\texttt{DECISION}, \texttt{RECONFIGURATION}\}$, \textit{src}, \textit{dst} $\in 
\mathbb{N}$ are the ids of the communicating processes, and \textit{content-
attribute-x}, for $x \in \mathbb{N}$ constitute the message payload. The 
different types of messages are as follows: A \textit{request} message is sent 
by a client to a participant and may be forwarded between participants. It 
carries a deterministic operation to be executed by the service. A 
\textit{response} message is sent by a replica to a participant which then 
forwards it to clients and contains the result of the execution of an operation 
issued via a request by the client. A \textit{decision} message is sent by a 
participant to a replica, and carries a request along with its order of 
execution. Finally, \textit{reconfiguration} messages are sent between 
participants and are used to update the configuration of the MPTC execution.

\subsubsection{Clients}
Clients are uniquely identified processes whose identities are independent from 
those of the participants and the replicas. They connect to at least $f+1$ 
participants to which they send requests of the form: 

\begin{equation}
(\textit{client-id}, \textit{request-number}, \textit{command}) \nonumber
\end{equation}

\noindent where \textit{request-number} is a unique identifier for a particular 
request sent by this client and \textit{command} is an application-specific 
description of a deterministic operation and of any arguments that operation 
requires. Note that each pair $(\textit{client-id}, \textit{request-number})$ 
uniquely identifies a request received by the state machine and will be used by 
participants and replicas to track requests that are new, under processing, or 
executed. Clients therefore maintain the following state:

\begin{itemize}
	\item \textit{cid} $\in \mathbb{N}$, which is initialized with a unique 
	value identifying the client.
	\item \textit{rsn} $\in \mathbb{N}$, which is initialized to 0, incremented 
	each time the client issues a new request and uniquely identifies requests 
	for a given client.
	\item \textit{pending-requests} $\subseteq \mathbb{N}$, which is an 
	initially empty set that stores ids of requests the client has sent, but 
	has not yet received response.
\end{itemize}

Our clients are modeled as state machines with the following transitions:

\begin{enumerate}
	\item[T1:]\textbf{Precondition}: There is a command \textit{cmd} that needs 
	to be executed\\
	\textbf{Action}: Send message $\langle\textit{cid}, \textit{pid}, 
	\texttt{REQUEST}, (\textit{cid}, \textit{rsn}, \textit{cmd})\rangle$ to 
	each participant \textit{pid} that \textit{cid} is connected to; add 
	\textit{rsn} to 	\textit{pending-requests} and update $\textit{rsn} = 
	\textit{rsn} + 1$.
	\item[T2:]\textbf{Precondition}: Received message $\langle\textit{pid}, 
	\textit{cid}, \texttt{RESPONSE}, (\textit{rsn}^\prime, 
	\textit{result})\rangle$\\
	\textbf{Action}: If $\textit{rsn}^\prime \notin \textit{pending-requests}$ 
	ignore; otherwise remove $\textit{rsn}^\prime$ from 
	\textit{pending-requests}.
\end{enumerate}

\subsubsection{Participants}
The participants receive requests from clients and are responsible for ordering 
these requests and send them for execution to the replicas. To achieve this 
they spawn MPTC instances, one for each request that needs to be ordered. Each 
instance has its own identifier and decided requests are ordered according to 
the identifiers of the MPTC instances that decided them. Only one participant 
set can be active at any point in time. The participants of this set are 
responsible for creating consensus instances. The active participant set may 
concurrently run different instances of MPTC to deal with multiple requests. As 
in the case of Turtle Consensus~\cite{NvR15}, we consider a global threshold 
$W$ that limits the number of concurrent consensus instances that have not yet 
decided. Requests that arrive when $W$ concurrent MPTC instances are running 
are queued and processed when some of the running instances complete.

Note that in most common SMRPs clients typically send their requests to 
replicas directly, which are responsible for instantiating consensus instances 
to these requests. By disabling communication between clients and replicas we 
prevent clients from launching DoS attacks on the replicas. This approach does 
however have the issue that when a decision is made and the corresponding 
request is executed, two hops of communication are needed for the response to 
arrive at the clients. In addition, it burdens participants with implementing 
functionality beyond running consensus that is commonly offered by replicas, 
like keeping track of the values already decided, maintain state for ongoing 
instances and other aspects of the implementation, which makes it a less 
efficient approach. In this implementation, however, we are primarily 
interested in building an attack-tolerant SMRP and thus made the choice of 
communication pattern we described above.

Each participant is connected to all replicas and zero or more clients. Apart 
from the state required to run MPTC described in 
Section~\ref{sec:mptc_protocol}, each participant additionally maintains the 
following state:

\begin{itemize}
	\item \textit{pid} $\in \mcn$, initialized with the identifier of the 
	participant.
	\item \textit{next-instance} $\in \mathbb{N}$, initialized to 0, stores the 
	id of the next instance that will be created.
	\item \textit{configuration} $\in \mcr$, initialized to $C_0$ provided by 
	the dealer, stores the configuration that $pid$ considers active.
	\item \textit{requests} $\subseteq \mathbb{N}^2 \times \textit{Ops}$, which 
	is an initially empty set that stores requests that have been received but 
	not yet decided. \textit{Ops} denotes the space of commands and it is 
	application-specific.
	\item \textit{instances} $\subseteq \mathbb{N}^3$, which is an initially 
	empty set that stores for each running instance the id of the instance as 
	well as the request identifier ($\textit{cid}\ ^\prime$, \texttt{rsn}); the 
	request 	identifier is \textit{pid}'s input proposal for that MPTC instance.
	\item \textit{rstate} $\in \{\texttt{TRUE}, \texttt{FALSE}\}$, initialized 
	to \texttt{FALSE}, indicates whether the participant is currently under 
	reconfiguration.
	\item \textit{responses} $\subseteq \mathbb{N}^3$, which is an initially 
	empty set that stores mappings of requests and processes from which they 
	were received. It is primarily used to forward responses back to the 
	processes that sent or relayed requests.
\end{itemize}

Let \textit{configuration.participants} denote the participant set of 
\textit{configuration} and let \textit{configuration.round} denote the round in 
which \textit{configuration} is used. Aside from the transitions determined by 
the MPTC instances a participant may be running, it additionally performs the 
following transitions:

\begin{enumerate}
    \item[T1:]\textbf{Precondition}: Received message $\langle 
    \texttt{REQUEST}, \textit{pid}\ ^\prime, \textit{pid}, (\textit{cid}, 
    \textit{rsn}, \textit{cmd}) \rangle$ where $\textit{pid}\ ^\prime$ is 
    either a client or another participant, $\textit{rstate} = \texttt{FALSE}$, 
    $\textit{pid} \notin \textit{configuration.participants}$ and 
    $(\textit{pid}\ ^\prime, \textit{cid}, \textit{rsn}) \notin 
    \textit{responses}$\\
	\textbf{Action}: Add ($\textit{pid}\ ^\prime$, \textit{cid}, \textit{rsn}) 
	in \textit{responses} and send $\langle \texttt{REQUEST}, \textit{pid}, p, 
	(\textit{cid}, \textit{rsn}, \textit{cmd}) \rangle$ to each $p \in 
	\textit{configuration.participants}$.
	\item[T2:]\textbf{Precondition}: Received message $\langle 
	\texttt{REQUEST}, \textit{pid}\ ^\prime, \textit{pid}, (\textit{cid},$ 
	$\textit{rsn}, \textit{cmd}) \rangle$ where $\textit{pid}\ ^\prime$ is 
	either a client or another participant, $\textit{rstate} = \texttt{FALSE}$, 
	$\textit{pid} \in \textit{configuration.participants}$, $\nexists 
	(\textit{cid}, \textit{rsn}, *) \in \textit{requests}$ and $(\textit{pid}\ 
	^\prime, \textit{cid}, \textit{rsn}) \notin \textit{responses}$\\
	\textbf{Action}: Add (\textit{cid}, \textit{rsn}, \textit{cmd}) in 
	\textit{requests} and ($\textit{pid}\ ^\prime$, \textit{cid}, \textit{rsn}) 
	in \textit{responses}. Send $\langle \texttt{REQUEST}, \textit{pid}, p, 
	(\textit{cid}, \textit{rsn}, \textit{cmd}) \rangle$ to each $p \in 
	\textit{configuration.participants} \setminus \{\textit{pid}\}$. If 
	$W >$ $|\textit{instances}|$ then create an MPTC instance with instance id, 
	\textit{next-instance} and proposal (\textit{cid}, \textit{rsn}). Then add 
	(\textit{next-instance}, \textit{cid}, \textit{rsn}) to \textit{instances} 
	and update $\textit{next-instance}$ $= \textit{next-instance} + 1$. Finally, 
	run round \textit{configuration.round} of the new MPTC instance.
    \item[T3:]\textbf{Precondition}: Received message $\langle 
    \texttt{REQUEST}, \textit{pid}\ ^\prime, \textit{pid}, (\textit{cid}, 
    \textit{rsn}, \textit{cmd}) \rangle$ where $\textit{pid}\ ^\prime$ is 
    either a client or another participant, $\textit{rstate} = \texttt{FALSE}$, 
    $\textit{pid} \in \textit{configuration.participants}$, $\exists 
    (\textit{cid}, \textit{rsn}, *) \in \textit{requests}$ and $(\textit{pid}\ 
    ^\prime, \textit{cid}, \textit{rsn}) \notin \textit{responses}$\\
	\textbf{Action}: Add ($\textit{pid}\ ^\prime$, $\textit{cid}$, 
	\textit{rsn}) in \textit{responses}.
	\item[T4:]\textbf{Precondition}: Received message $\langle 
	\texttt{REQUEST}, \textit{pid}\ ^\prime, \textit{pid}, (\textit{cid}, 
	\textit{rsn}, \textit{cmd}) \rangle$ where $\textit{pid}\ ^\prime$ is 
	either a client or another participant, and $\textit{rstate} = 
	\texttt{TRUE}$\\
	\textbf{Action}: If $\nexists (\textit{cid}, \textit{rsn}, *) \in 
	\textit{requests}$ add (\textit{cid}, \textit{rsn}, \textit{cmd}) in 
	\textit{requests}. If $(\textit{pid}\ ^\prime, \textit{cid}, \textit{rsn}) 
	\notin \textit{responses}$ add ($\textit{pid}\ ^\prime$, \textit{cid}, 
	\textit{rsn}) in \textit{responses}.
	\item[T5:]\textbf{Precondition}: Received message $\langle 
	\texttt{RESPONSE}, \textit{rid}, \textit{pid}, (\textit{cid}, \textit{rsn}, 
	\textit{result}) \rangle$ from any replica or participant \textit{rid} and 
	$\exists (*, \textit{cid}, \textit{rsn}) \in \textit{responses}$\\
	\textbf{Action}: For each $(\textit{pid}\ ^\prime, \textit{cid}, 
	\textit{rsn}) \in \textit{responses}$, send $\langle \texttt{RESPONSE}, 
	\textit{pid}, \textit{pid}\ ^\prime, (\textit{cid}$, $\textit{rsn}$, 
	$\textit{result}) \rangle$ and remove ($\textit{pid}\ ^\prime$, 
	\textit{cid}, \textit{rsn}) from \textit{responses}.
	\item[T6:]\textbf{Precondition}: On round completion of some instance 
	$(\textit{instance}, \textit{cid}, \textit{rsn}) \in \textit{instances}$ 
	with outcome $(D, (\textit{cid}, \textit{rsn}))$ and (\textit{cid}, 
	\textit{rsn}, \textit{cmd}) $\in$ \textit{requests}\\
	\textbf{Action}: Send $\langle \texttt{DECISION}, \textit{pid}, 
	\textit{rid}, (\textit{instance}, \textit{cid}, \textit{rsn}, \textit{cmd}, 
	\textit{configuration}) \rangle$ to each $\textit{rid} \in \mcr$. Then 
	remove (\textit{instance}, \textit{cid}, \textit{rsn}) from 
	\textit{instances} and (\textit{cid}, \textit{rsn}, \textit{cmd}) from 
	\textit{requests}.
    \item[T7:]\textbf{Precondition}: $|\textit{instances}| = 0$, $\exists 
    (\textit{cid}, \textit{rsn}, \textit{cmd}) \in \textit{requests}$, 
    $\textit{rstate} = \texttt{FALSE}$ and $\textit{pid} \in 
    \textit{configuration.participants}$\\
	\textbf{Action}: Create an MPTC instance with instance id, \textit{next-
	instance} and proposal (\textit{cid}, \textit{rsn}). Then add 
	(\textit{next-instance}, \textit{cid}, \textit{rsn}) to \textit{instances} 
	and update $\textit{next-instance} = \textit{next-instance} + 1$. Finally, 
	run round \textit{configuration.round} of the new MPTC instance.
	\item[T8:]\textbf{Precondition}: On round completion of some instance 
	$(\textit{instance}, \textit{cid}, \textit{rsn}) \in \textit{instances}$ 
	with outcome $(M, (\textit{cid}, \textit{rsn}))$ or $(U, (\textit{cid}, 
	\textit{rsn}))$ and there are still instances that have not completed 
	\textit{configuration.round}\\
	\textbf{Action}: Update $\textit{rstate} = \texttt{TRUE}$.
	\item[T9:]\textbf{Precondition}: On round completion of some instance 
	$(\textit{instance}, \textit{cid}, \textit{rsn}) \in \textit{instances}$ 
	with outcome $(M, (\textit{cid}, \textit{rsn}))$ or $(U, (\textit{cid}, 
	\textit{rsn}))$, next configuration \textit{cfg} and there no more 
	instances that have not completed \textit{configuration.round}\\
	\textbf{Action}: Update $\textit{rstate} = \texttt{FALSE}$ and 
	$\textit{configuration} = \textit{cfg}$. Then send $\langle 
	\texttt{RECONFIGURATION}$, \textit{pid}, \textit{p}, (\textit{instances}, 
	\textit{requests}, \textit{configuration}, \textit{instance})$\rangle$ to 
	each $\textit{p} \in \textit{configuration}$. Finally, update 
	$\textit{instances} = \emptyset$ and $\textit{requests} =  \emptyset$.
	\item[T10:]\textbf{Precondition}: Received messages 
	$\langle$\texttt{RECONFIGURATION}, $p$, \textit{pid}, 
	($\textit{instances}_p$, $\textit{requests}_p$, 
	$\textit{configuration}^\prime$, $\textit{instance}_p) \rangle$ from a set 
	of participants $Q$, $|Q| = f+1$ such that \\
	$\textit{configuration}^\prime\textit{.round} > 
	\textit{configuration.round}$\\
	\textbf{Action}: Update $\textit{configuration} = 
	\textit{configuration}^\prime$, $\textit{next-instance} = \textit{max}
	(\textit{instance}_p \text{ for } p \in Q)$ and $\textit{requests} = 
	\textit{requests} \bigcup (\cup_{p \in Q} \textit{requests}_p)$. For each 
	$p \in Q$ and $(\textit{cid}, \textit{rsn}, \textit{cmd}) \in 
	\textit{requests}_p$ add ($p$, \textit{cid}, \textit{rsn}) in 
	\textit{responses}. For each $(\textit{instance}^\prime, \textit{cid}, 
	\textit{rsn}) \in \cup_{p \in Q} \textit{instances}_p$ use Phase 3 of MPTC 
	to update the proposal of $\textit{instance}^\prime$ and create an MPTC 
	instance $(\textit{instance}^\prime, \textit{cid}\ ^\prime, 
	\textit{rsn}^\prime)$ where $(\textit{cid}\ ^\prime, \textit{rsn}^\prime)$ 
	is the updated proposal. Finally, add $(\textit{instance}^\prime, 
	\textit{cid}\ ^\prime, \textit{rsn}^\prime)$ to \textit{instances}.
\end{enumerate}

Like in Turtle Consensus, MPTC is lazily instantiated for each slot. MPTC 
messages carry instance identifiers so incoming protocol messages are properly 
processed by the correct instance. If an instance has not yet been created, 
messages for that instance are queued and processed when it is created. 
Transition T2 ensures that if a request creates an instance to one of the 
active participants then unless the receiving participant fails, the remaining 
correct processes in the active participant set will also receive and create an 
instance for that request. Transition T7 ensures that if a proposed request 
does not get decided in one of the running instances, a new instance will be 
created for that proposal.

\subsubsection{Replicas}
The sets of replicas and participants are disjoint and $|\mcr| > f$ so at least 
one replica is always correct. Each replica in $\mcr$ is run by a different 
process and maintains a copy of state of the service implemented by the SMRP. 
Replicas are responsible for executing the commands issued through clients' 
requests in the order established by the participants. This order is determined 
by the instance id's associated with each decision received. As in~\cite{NvR15}, we model this ordering as a sequence of numbered slots 
with the first slot numbered as $0$. All replicas are initialized in the same 
state and since all commands are deterministic, executing commands in order at 
all replicas ensures that they all end up in the same state. To provide this 
functionality, replicas maintain the following state additional to the one 
related with the service SMRP implements:

\begin{itemize}
	\item \textit{rid} $\in \mcr$, initialized with the identifier of the 
	replica.
	\item \textit{execution-slot} $\in \mathbb{N}$, which is initialized 0 and 
	maintains the position in the ordering of commands that should be executed 
	next.
	\item \textit{decisions} $\subseteq \mathbb{N}^3 \times \textit{Ops} \times 
	\mcc$, which is an initially empty set of mappings between a slot and a 
	request as well as the configuration under which the request was decided.
	\item \textit{completed} $\subseteq \mathbb{N}^3$, which is an initially 
	empty set that stores mappings between slots and ids of requests that were 
	executed in those slots.
\end{itemize}

In addition, to those transitions defined by the deterministic state machine 
implemented by the SMRP, replicas perform the following transitions:

\begin{enumerate}
	\item[T1:]\textbf{Precondition}: Received message $\langle 
	\texttt{DECISION}$,  $\textit{pid}$, $\textit{rid}$, ($\textit{instance}$, 
	$\textit{cid}$, $\textit{rsn}$, $\textit{cmd}$, $C) \rangle$ where $C \in 
	\mcc$, $\nexists (\textit{instance}, *, *, *, *) \in \textit{decisions}$, 
	and $\textit{execution-slot} \leq \textit{instance}$\\
	\textbf{Action}: Add $(\textit{instance}$, $\textit{cid}$, $\textit{rsn}$, 
	$\textit{cmd}$, $C)$ in \textit{decisions}.
	\item[T2:]\textbf{Precondition}: $\exists (\textit{instance}, \textit{cid}, 
	\textit{rsn}, \textit{cmd}, C) \in \textit{decisions}$, and 
	$\textit{execution-slot} = \textit{instance}$, and $\nexists (*, 
	\textit{cid}, \textit{rsn}) \in \textit{completed}$\\
	\textbf{Action}: Execute $\textit{cmd}$ and let \textit{result} be the 
	outcome of the operation. Send $\langle \texttt{RESPONSE}, \textit{rid}$, 
	$\textit{pid}, (\textit{cid}, \textit{rsn}, \textit{result}) \rangle$ to 
	each $\textit{pid}$ in the participant set specified by $C$. Add 
	$(\textit{instance}, \textit{cid}$, $\textit{rsn})$ to \textit{completed} and 
	remove (\textit{instance}, \textit{cid}, \textit{rsn}, \texttt{cmd}, $C$) 
	from \textit{decisions}. Update $\textit{execution-slot}$ $=$ 
	$\textit{execution-slot} + 1$.
	\item[T3:]\textbf{Precondition}: $\exists (\textit{instance}, \textit{cid}, 
	\textit{rsn}, \textit{cmd}, C) \in \textit{decisions}$, and 
	$\textit{execution-slot} = \textit{instance}$, and $\exists (*, 
	\textit{cid}, \textit{rsn}) \in \textit{completed}$\\
	\textbf{Action}: Add (\textit{instance}, \textit{cid}, \textit{rsn}) to 
	\textit{completed} and remove (\textit{instance}, \textit{cid}, 
	\textit{rsn}, \textit{cmd}, $C$) from \textit{decisions}. Update 
	$\textit{execution-slot} = \textit{execution-slot} + 1$.
\end{enumerate}

In other words, replicas execute commands one slot at a time as soon as they 
become available. Decisions already received or executed are ignored or treated 
as no-op. Note that there is still the issue of garbage collecting executed 
requests in \textit{completed} that are no longer needed. In fact, 
\textit{completed} can get arbitrarily large the longer the system operates. 
One solution would be to only track for each client the largest request 
identifier used such that future decisions about an already executed request 
number for a given client can be ignored.

\end{document}